\newtheorem{theorem}{Theorem}
\newtheorem{lemma}[theorem]{Lemma}
\newtheorem{proposition}[theorem]{Proposition}
\theoremstyle{definition}
\newtheorem*{definition}{Definition}
\newtheorem*{remark}{Remark}
\newtheorem{example}[theorem]{Example}
\newenvironment{keywords}{{\bf Keywords:}}{}
\newcommand{\ff}{{\mathbb{F}}}
\title{Isometry-Dual Flags of AG Codes}
\author{Maria Bras-Amor\'os,  Iwan Duursma and Euijin Hong}
\begin{document}
\maketitle

\begin{abstract}
  Consider a complete flag $\{0\} = C_0 < C_1 < \cdots < C_n =  \mathbb{F}^n$ of one-point AG codes of length $n$ over the finite field $\mathbb{F}$. The codes are defined by evaluating functions with poles at a given point $Q$ in points $P_1,\dots,P_n$ distinct from $Q$. A flag has the isometry-dual property if the given flag and the corresponding dual flag are the same up to isometry. For several curves, including the projective line, Hermitian curves, Suzuki curves, Ree curves, and the Klein curve over the field of eight elements, the maximal flag, obtained by evaluation in all rational points different from the point $Q$, is self-dual. More generally, we ask whether a flag obtained by evaluation in a proper subset of rational points is isometry-dual. In \cite{GMRT} it is shown, for a curve of genus $g$, that a flag of one-point AG codes defined with a subset of $n > 2g+2$ rational points is isometry-dual if and only if the last code $C_n$ in the flag is defined with functions of pole order at most $n+2g-1$. Using a different approach, we extend this characterization to all subsets of size $n \geq 2g+2$. Moreover we show that this is best possible by giving examples of isometry-dual flags with $n=2g+1$ such that $C_n$ is generated by functions of pole order at most $n+2g-2$. We also prove a necessary condition, formulated in terms of maximum sparse ideals of the Weierstrass semigroup of $Q$, under which a flag of punctured one-point AG codes inherits the isometry-dual property from the original unpunctured flag.
\end{abstract}

\begin{keywords}
AG code, punctured code, dual code
\end{keywords}

\section{Introduction}\label{s:Introduction}

Let $\ff$ be a finite field. A linear code $C$ of length $n$ over $\ff$ is a subspace $C \subset \ff^n$. Using the standard inner product, the space $\ff^n$ is an inner product space. The dual code $C^\perp$ of a code $C$ is defined as the orthogonal complement of $C$ in the inner product space $\ff^n$.
A complete flag of codes is an increasing sequence $\{0\} = C_0 < C_1
< \cdots < C_n = \mathbb{F}^n$ of linear subspaces of $\mathbb{F}^n$. The dual of the flag is the flag $\{0\} = C_n^\perp < C_{n-1}^\perp < \cdots < C_0^\perp = \mathbb{F}^n$. In this paper we consider duality properties of flags of codes, and in particular flags of one-point AG codes. 

A flag is self-dual if it is identical to its dual flag. In general we
use a weaker notion of duality and we say that a flag is isometry-dual
if the dual of the flag is the image of the original flag under an
isometry. A linear isometry for $\ff^n$ is an invertible linear map
${\mathbb F}^n\rightarrow{\mathbb F}^n$ preserving the Hamming distance $d_H$ , which is defined as the number of positions in which two vectors have different coordinates.  

A linear isometry can be written uniquely as the composition of a permutation matrix and an invertible diagonal matrix. AG codes are defined with an ordered list $P_1, P_2, \ldots P_n$ of $n$ rational points. For isometries between AG codes we assume that the codes are defined with the same ordered list of $n$ points and that isometries preserve this order, in other words that isometries do not permute coordinates.
With this in mind, we say that a complete flag of AG codes is isometry-dual if the dual flag is the image of the original flag under right multiplication by an invertible diagonal matrix.



Let $\{0\} = C_0 < C_1 < \cdots < C_n = \mathbb{F}^n$ be an isometry-dual flag of linear codes with isometry
matrix $M$, such that each $C_i$ is the row space of a matrix $G_i$,
for $i=0,1,\ldots,n.$ Then the dual codes are the row spaces of
matrices $G_i M$, for $i=0,1,\ldots,n,$ so that, 
for $i=0,1,\ldots,n$,
\[
G_i (G_{n-i} M)^T = G_i M^T G_{n-i}^T = 0. 
\]
For the codes in a complete flag, the matrices $G_i$ can be chosen such that each $G_i$ is the leading $i \times n$ submatrix of an $n \times n$ matrix $G$. For such a choice, the flag is isometry-dual if there exists an isometry with matrix $M$ such that
\begin{equation} \label{eq:gmg}
( G M^T G^T )_{i,j} = 0, ~~~\text{for $i+j \leq n$}, \qquad  ( G M^T G^T )_{i,j} \neq 0, ~~~\text{for $i+j = n+1$}.
\end{equation}
We give two examples of self-dual flags. 
For a flag of Reed-Solomon type over a field $\mathbb{F}_q$ of size $q$ we evaluate the monomials  $1, x, \ldots, x^{q-1}$ in the elements $\alpha_1, \ldots, \alpha_q$ of the field $\ff_q$. For $q=8$ the matrices $G$ and $G G^T$ are
 \[
{\small \begin{array}{ccccccccccccc}
\toprule 
& &1 &\alpha&\alpha^2 &\alpha^3 &\alpha^4 &\alpha^5 &\alpha^6 &0 \\
\midrule
1 & &1 &1 &1 &1 &1 &1 &1  &1 \\
x & &1 &\alpha &\alpha^2 &\alpha^3 &\alpha^4 &\alpha^5 &\alpha^6 &0 \\
x^2 & &1 &\alpha^2 &\alpha^4 &\alpha^6 &\alpha &\alpha^3 &\alpha^5 &0 \\
x^3 & &1 &\alpha^3 &\alpha^6 &\alpha^2 &\alpha^5 &\alpha &\alpha^4 &0 \\
x^4 & &1 &\alpha^4 &\alpha &\alpha^5 &\alpha^2 &\alpha^6 &\alpha^3 &0 \\
x^5 & &1 &\alpha^5 &\alpha^3 &\alpha &\alpha^6 &\alpha^4 &\alpha^2 &0 \\
x^6 & &1 &\alpha^6 &\alpha^5 &\alpha^4 &\alpha^3 &\alpha^2 &\alpha &0 \\
x^7 & &1 &1 &1 &1 &1 &1 &1 &0 \\
\bottomrule
\end{array} \qquad 
G G^T = 
\left[ \begin{array}{cccccccc}
0 &0 &0 &0 &0 &0 &0 &1 \\
0 &0 &0 &0 &0 &0 &1 &0 \\
0 &0 &0 &0 &0 &1 &0 &0 \\
0 &0 &0 &0 &1 &0 &0 &0 \\
0 &0 &0 &1 &0 &0 &0 &0 \\
0 &0 &1 &0 &0 &0 &0 &0 \\
0 &1 &0 &0 &0 &0 &0 &0 \\
1 &0 &0 &0 &0 &0 &0 &0 \\
\end{array}
\right]}
\]
For a flag of Hermitian type we choose a curve $y^q+y = x^{q+1}$ over a field $\ff$ of size $q^2$ and evaluate monomials $x^i y^j$ in the $q^3$ points $(a,b) \in \ff \times \ff$ that satisfy $b^q+b = a^{q+1}$. Monomials $x^i y^j$, for $0 \leq  i \leq q^2-1$, $0 \leq j \leq q-1$, are ordered according to their weighted degree $qi+(q+1)j$. The monomials have a single pole at infinity and the weighted degree agrees with the pole order. For $q=2$ and 
$\ff = \{ 0,1,\omega,{\bar \omega}\}$ the matrices $G$ and $G G^T$ are
 \[
{\small \begin{array}{ccccccccccccc}
\toprule 
& &(1,\omega) &(1,{\bar \omega}) &(\omega,\omega) &(\omega,{\bar \omega}) &({\bar \omega},\omega) &({\bar \omega},{\bar \omega}) &(0,0) &(0,1) \\
\midrule
1 & &1 &1 &1 &1 &1 &1 &1  &1 \\
x & &1 &1 &\omega &\omega &{\bar \omega} &{\bar \omega} &0 &0 \\
y & &\omega &{\bar \omega} &\omega &{\bar \omega} &\omega &{\bar \omega} &0 &1 \\
x^2 & &1 &1 &{\bar \omega} &{\bar \omega} &\omega &\omega &0 &0 \\
xy & &\omega &{\bar \omega} &{\bar \omega} &1 &1 &\omega &0 &0 \\
x^3 & &1 &1 &1 &1 &1 &1 &0 &0 \\
x^2y & &\omega &{\bar \omega} &1 &\omega &{\bar \omega} &1 &0 &0 \\
x^3y & &\omega &{\bar \omega} &\omega &{\bar \omega} &\omega &{\bar \omega} &0 &0 \\
\bottomrule
\end{array} \qquad  
G G^T = 
\left[ \begin{array}{cccccccc}
0 &0 &0 &0 &0 &0 &0 &1 \\
0 &0 &0 &0 &0 &0 &1 &0 \\
0 &0 &0 &0 &0 &1 &0 &1 \\
0 &0 &0 &0 &1 &0 &0 &0 \\
0 &0 &0 &1 &0 &0 &1 &0 \\
0 &0 &1 &0 &0 &0 &0 &1 \\
0 &1 &0 &0 &1 &0 &0 &0 \\
1 &0 &1 &0 &0 &1 &0 &1 \\
\end{array}
\right]}
\]
In both cases the flag is self-dual and (\ref{eq:gmg}) holds with $M=I$. The main question in this paper is to decide whether the projection of a flag on a subset of coordinates preserves the isometry-dual property. The projection of a flag is described by a square submatrix of the original matrix $G$. To obtain the  submatrix for a projection on $k$ coordinates we first choose a submatrix of $k$ columns. In the submatrix we then select $k$ linearly independent rows, greedily from top to bottom. As an example we select the columns $(\omega,\omega), (\bar \omega, \bar \omega), (0,1).$ The top three rows in the submatrix are linearly independent. The reduced flag of codes has matrix $G$ of size $3$. The matrix satisfies (\ref{eq:gmg}), with $M \neq I,$ and the reduced flag is isometry-dual.
\[
\begin{array}{ccccccccccccc}
\toprule 
& &(\omega,\omega)  &({\bar \omega},{\bar \omega})  &(0,1) \\
\midrule
1 & &1 &1 &1 \\
x & &\omega  &{\bar \omega} &0 \\
y &  &\omega &{\bar \omega} &1 \\
\bottomrule
\end{array} \qquad  
G  
\left[ \begin{array}{cccccccc}
{\bar \omega} &0 &0 \\
0 &\omega &0 \\
0 &0 &1 \\
\end{array}
\right] G^T =
\left[ \begin{array}{cccccccc}
0 &0 &1 \\
0 &1 &1 \\
1 &1 &0 \\
\end{array}
\right].
\]
The inheritance of the isometry-dual property is different for Reed-Solomon codes and for Hermitian codes. When a Reed-Solomon code is restricted to a subset of coordinates the dual code
becomes a generalized Reed-Solomon code \cite[Chapter~10, Theorem~4]{MWS}. Generalized Reed-Solomon codes are isometric to Reed-Solomon codes and thus any reduced flag is isometry-dual. On the other hand, for Hermitian codes only certain subsets of columns define isometry-dual flags. 


We first recall the definition of one-point AG codes. 
Throughout this paper, let $\mathcal{X}$ be a smooth absolutely irreducible pojective curve of genus $g$ defined over the finite field $\ff$ and let $P_1, \ldots, P_n$ and $Q$ be distinct rational points on $\mathcal{X}$. Define a divisor $D=P_1+P_2 + \ldots + P_n$ and let $G$ be a divisor with support disjoint from $D$. 
The AG code $C_L(D, G) \subset \ff^n$ is defined as the image of the evaluation map
\[
\text{ev}_D : L(G) \longrightarrow \mathbb{F}^n, \quad
\text{ev}_D(f) = (f(P_1), f(P_2), \ldots, f(P_n)).
\]
One-point AG codes are defined with divisor $G$ a multiple of $Q$. To define a complete flag of one-point AG codes, let $m_0 = -1$ and $C_0 = C_L(D,m_0Q) = 0$, and choose the remaining codes in the flag of the form $C_i= C_L(D, m_iQ)$, where the $m_i$ are mininal such that
$\dim C_L(D, m_iQ) = i,$ for $i=1, \ldots, n$.
In particular, $m_1 = 0$. Then the sequence $\{0\} = C_0 < C_1 < \cdots < C_n = \mathbb{F}^n$ is a complete flag of linear codes.
We call the $m_i$, for $i=1,2, \ldots, n,$ the \textit{geometric nongaps} corresponding to $D$ and $Q$ and write $W^* = \{ m_1, \ldots, m_n \}$. 
Denoting the semigroup of Weierstrass nongaps of $\mathcal{X}$ at $Q$ by $W$, we have $W^* \subset W$. In this paper, in the context of one-point AG codes, a {\em complete flag} is not only the flag as vector spaces but also constructed with geometric nongaps.

\begin{lemma}\label{lemmacr1cr2}
	An element $a \in W$ is a geometric nongap if and only if it satisfies the following two criterion.
	\begin{align}
	&L(aQ) \neq L((a-1)Q) \label{cr1}\\
	&L(aQ-D) = L((a-1)Q-D) \label{cr2}
	\end{align}
\end{lemma}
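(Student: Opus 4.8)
The plan is to translate everything into dimension counts via the kernel of the evaluation map. For any integer $m$ the restriction of $\mathrm{ev}_D$ to $L(mQ)$ has kernel exactly $L(mQ-D)$, giving a short exact sequence
\[
0 \longrightarrow L(mQ-D) \longrightarrow L(mQ) \longrightarrow C_L(D,mQ) \longrightarrow 0, \qquad \dim C_L(D,mQ) = \ell(mQ) - \ell(mQ-D).
\]
Since $L((m-1)Q) \subseteq L(mQ)$ implies $C_L(D,(m-1)Q) \subseteq C_L(D,mQ)$, the function $m \mapsto \dim C_L(D,mQ)$ is non-decreasing; it equals $0$ at $m=-1$, equals $n$ for $m \gg 0$ (by Riemann--Roch applied to $mQ$ and $mQ-D$), and increases by at most $1$ at each step because both $\ell(mQ)-\ell((m-1)Q)$ and $\ell(mQ-D)-\ell((m-1)Q-D)$ lie in $\{0,1\}$ (enlarging a divisor by one rational point changes $\ell$ by $0$ or $1$). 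The first thing I would record, using exactly these two properties, is that $a$ is a geometric nongap, i.e. $a = m_i$ for some $i$, if and only if $\dim C_L(D,aQ) > \dim C_L(D,(a-1)Q)$: if $a=m_i$ then monotonicity and minimality force $\dim C_L(D,(a-1)Q) < i = \dim C_L(D,aQ)$; conversely a strict jump together with the step-size bound forces $\dim C_L(D,(a-1)Q) = i-1$ with $i = \dim C_L(D,aQ)$, and monotonicity then makes $a$ the minimal index attaining $i$.

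Next I would write, for $a \in W$,
\[
\dim C_L(D,aQ) - \dim C_L(D,(a-1)Q) = \big(\ell(aQ)-\ell((a-1)Q)\big) - \big(\ell(aQ-D)-\ell((a-1)Q-D)\big),
\]
a difference of two quantities each lying in $\{0,1\}$, hence equal to $1$ if and only if the first quantity is $1$ and the second is $0$. The first quantity being $1$ means $\ell(aQ) > \ell((a-1)Q)$, equivalently $L(aQ) \neq L((a-1)Q)$, which is criterion~(\ref{cr1}) (and is automatic from $a \in W$). The second quantity being $0$ means $\ell(aQ-D) = \ell((a-1)Q-D)$; since $L((a-1)Q-D) \subseteq L(aQ-D)$ always holds, equality of dimensions is equivalent to equality of the spaces, which is criterion~(\ref{cr2}). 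Combining this with the jump characterization of geometric nongaps from the first step yields the claimed equivalence.

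The only point requiring genuine care — and the closest thing to an obstacle in an otherwise routine argument — is the bookkeeping in the first paragraph: one must verify that "$a$ equals some $m_i$" is literally the same statement as "$\dim C_L(D,\cdot\,Q)$ jumps at $a$", and this genuinely uses both monotonicity and the step-size-at-most-one property rather than being self-evident from the definition of the $m_i$. Everything after that is a formal consequence of the exact sequence for $\mathrm{ev}_D$ and the elementary behaviour of $\ell$ under adding one rational point to a divisor.
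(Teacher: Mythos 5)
Your proposal is correct and follows essentially the same route as the paper: identify $\dim C_L(D,aQ)$ with $\ell(aQ)-\ell(aQ-D)$, note that each of the two increments lies in $\{0,1\}$, and conclude that the code dimension jumps at $a$ exactly when (\ref{cr1}) holds and (\ref{cr2}) holds. You merely spell out more carefully the step the paper leaves implicit, namely that ``$a$ is a geometric nongap'' is equivalent to ``$\dim C_L(D,\cdot\,Q)$ jumps at $a$''.
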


\begin{proof}
	Note that $C_L(D, aQ) \simeq L(aQ)/L(aQ-D)$. In both
        (in)equalities, the dimension of the left hand side is greater
        or equal to that of the right hand side at most 1. Then the
        first inequality is the criteria for a Weierstrass nongap. The
        second one is induced from the first one and the fact that
        $C_L(D, aQ) \gneq C_L(D, (a-1)Q)$ 
\end{proof}

For a complete flag of one-point AG codes $(C_L(D, m_i Q))_{i=0, \ldots, n}$, we call the pair $(n,m_n)$ \textit{admissible}. Let $m:=m_n$. Our main theorem restricts the admissible pairs $(n,m)$ to four cases.

\begin{theorem}\label{t:main}
	Let $\big(C_L(D, m_iQ)\big)_{i=0, \ldots, n}$ be a complete flag of one-point AG codes. In general, $m \leq n+2g-1$ and the following hold. 
	\begin{enumerate}
		\item[(a)] $n=1$ if and only if $m=0$.
		\item[(b)] $n \leq  \dfrac{m}2 +1 \leq g+1$ for $0 < m \leq 2g$.
	\end{enumerate}
	Moreover, for isometry-dual flags, 
	\begin{enumerate}
		\item[(c)] $n \leq  \dfrac{m}2 + \dfrac32 < 2g+2$ for $2g < m \leq 4g$.
		\item[(d)] $n = m+1-2g \geq 2g+2$ for $m > 4g$.
	\end{enumerate}
	Conversely, a complete flag with $n = m+1-2g \geq 2g+2$ is isometry-dual.
\end{theorem}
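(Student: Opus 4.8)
\emph{Unconditional part.} The plan is first to dispatch the bound $m\le n+2g-1$ and parts (a), (b), which need no duality hypothesis. For the bound I would show that every $a\ge n+2g$ violates \eqref{cr2}: then $\deg((a-1)Q-D)=a-1-n\ge 2g-1$, so both $aQ-D$ and $(a-1)Q-D$ are nonspecial and Riemann--Roch gives $\dim L(aQ-D)=\dim L((a-1)Q-D)+1$; by Lemma~\ref{lemmacr1cr2}, $a$ is then not a geometric nongap, so $m=m_n\le n+2g-1$. Part (a) is immediate since $m_1=0$ and the $m_i$ are strictly increasing, so $m=m_n=0$ exactly when $n=1$. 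For (b) I use $n=\dim C_L(D,mQ)\le\dim L(mQ)$: when $mQ$ is special (so $0\le m\le 2g-2$) Clifford's theorem gives $\dim L(mQ)\le m/2+1$, and when $mQ$ is nonspecial Riemann--Roch gives $\dim L(mQ)=m-g+1\le m/2+1$ because $m\le 2g$; either way $n\le m/2+1$, and $m/2+1\le g+1$ is exactly $m\le 2g$.

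\emph{Residue setup for the isometry-dual statements.} Write $M=\mathrm{diag}(\mu_1,\dots,\mu_n)$ and let $G$ have rows $\mathrm{ev}_D(f_i)$, where $f_i$ has pole order $m_i$ at $Q$ and no other pole. The $(1,1)$ entry of \eqref{eq:gmg} forces $\sum_k\mu_k=0$, so by the residue theorem there is a Weil differential $\eta$ with simple poles exactly at the $P_k$, $\mathrm{res}_{P_k}\eta=\mu_k$, and $(\eta)=E-D$ with $E\ge 0$ effective of degree $n+2g-2$, supported away from $D$. Since $f_if_j\eta$ has poles only at the $P_k$ and at $Q$, the residue theorem yields $(GMG^T)_{i,j}=\sum_k\mu_k f_i(P_k)f_j(P_k)=-\mathrm{res}_Q(f_if_j\eta)$, which vanishes whenever $m_i+m_j\le e:=v_Q(\eta)$. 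The nonzero corner $(n,1)$ gives $e\le m-1$. I would then pick $\eta$ in its coset modulo holomorphic differentials so as to maximize $v_Q(\eta)$ and show that the anti-triangular pattern \eqref{eq:gmg} forces $e=m-1$ and the key inequality $m_i+m_{n-i}\le m-1$ for all $i$, with the sub-antidiagonal vanishing being structural rather than accidental.

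\emph{Deducing (c) and (d), and the obstacle.} The key inequality at the middle index gives $2m_{\lfloor n/2\rfloor}\le m_{\lfloor n/2\rfloor}+m_{\lceil n/2\rceil}\le m-1$, so at least $\lfloor n/2\rfloor$ geometric nongaps lie in $[0,(m-1)/2]$; as these are Weierstrass nongaps, their count is at most $\dim L(\lfloor(m-1)/2\rfloor Q)$, and Clifford's theorem (with the nonspecial boundary by Riemann--Roch) produces $n\le m/2+3/2$ for $2g<m\le 4g$, proving (c). For (d) I count holes: since $m>4g$ makes $mQ$ nonspecial, the number of integers in $[0,m]$ that are \emph{not} geometric nongaps equals $g+\dim L(mQ-D)$, and $\deg(mQ-D)=m-n\le 2g-1$ forces $\dim L(mQ-D)\le g$; the claim $n=m+1-2g$ is equivalent to $\dim L(mQ-D)=g$, i.e.\ to $K\sim(n+2g-2)Q-D$. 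The hard part is to promote the key inequality, in the range $m>4g$, to the full symmetry $m_i+m_{n+1-i}=m$, equivalently to show that maximizing $v_Q(\eta)$ concentrates $E$ entirely at $Q$ so that $E=(n+2g-2)Q$; separating genuinely structural vanishing from accidental cancellation (which does occur for small $n$, as in the reduced Hermitian example) and nailing the exact constant $m/2+3/2$ in (c) through a careful parity-and-Clifford count is where the real work lies. Given the symmetry, $\dim L(mQ-D)=g$ and $n=m+1-2g\ge 2g+2$.

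\emph{Converse.} Assume $n=m+1-2g\ge 2g+2$. Then $m$ is a geometric nongap, so \eqref{cr2} at $a=m$ gives $\dim L(K+D-(m-1)Q)=1$; this divisor has degree $0$, hence is principal, so $K\sim(m-1)Q-D=(n+2g-2)Q-D$. Fixing $\eta$ with $(\eta)=(n+2g-2)Q-D$ and writing each $\omega$ with $(\omega)\ge G-D$ as $\omega=f\eta$, I get $C_L(D,G)^\perp=C_L(D,(n+2g-2)Q-G)\cdot R$ with $R=\mathrm{diag}(\mathrm{res}_{P_k}\eta)$. Taking $G=aQ$ and comparing dimensions gives $N(a)+N(n+2g-2-a)=n$ for all $a$, where $N(a)=\dim C_L(D,aQ)$; differencing shows $a$ is a geometric nongap iff $n+2g-1-a$ is, so $m_i+m_{n+1-i}=n+2g-1$. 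Then $(n+2g-2)-m_i=m_{n+1-i}-1$, whence $C_i^\perp=C_L(D,(m_{n+1-i}-1)Q)\,R=C_L(D,m_{n-i}Q)\,R=C_{n-i}R$, so the dual flag is the original flag under the diagonal isometry $R$, and the flag is isometry-dual.
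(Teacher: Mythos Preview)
Your unconditional part and your converse match the paper's arguments: Riemann--Roch plus Clifford for $m\le n+2g-1$ and (a), (b); and for the converse both you and the paper deduce $K\sim(n+2g-2)Q-D$ from the dimension count at $a=m-1$, read off the symmetry $m_i+m_{n+1-i}=m$ of $W^*$, and exhibit the diagonal isometry via the residue vector.

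The genuine gap is in (c) and (d), and you flag it yourself. Your residue identity $(GMG^T)_{i,j}=-\mathrm{res}_Q(f_if_j\eta)$ gives only one implication: if $m_i+m_j\le v_Q(\eta)$ then the $(i,j)$-entry vanishes. The isometry-dual hypothesis tells you the $(i,n-i)$ entry \emph{is} zero, but nothing in your setup rules out accidental cancellation in the residue, so the ``key inequality'' $m_i+m_{n-i}\le m-1$ does not follow; nor is $e=m-1$ secured merely by maximizing $v_Q(\eta)$ over the coset $\eta+\Omega(0)$. Even granting the inequality, your middle-index Clifford count is slightly lossy for odd $n$ (it yields $n\le m/2+2$ rather than $m/2+3/2$), so the sharp constant in (c) would still require the more delicate argument you allude to but do not supply.

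The paper bypasses the cancellation problem by reversing direction: instead of arguing that entries are structurally zero, it proves that selected entries are \emph{nonzero}. Its pivot lemma observes that whenever $u,v\in W^*$ satisfy $u+v=m$, the entry of $G\,\mathrm{diag}(\mathbf v)\,G^T$ at position $(\pi(u),\pi(v))$ is nonzero, because $f_uf_v=cf_m+(\text{lower order})$ with $c\ne0$, and the first row of the anti-triangular pattern forces $\langle\mathbf v,\mathrm{ev}_D(f_l)\rangle=0$ for all geometric nongaps $l<m$ while $\langle\mathbf v,\mathrm{ev}_D(f_m)\rangle\ne0$. A nonzero entry must lie on or below the anti-diagonal, so $\pi(u)+\pi(v)\ge n+1$. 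For $m\ge4g$ the choice $(u,v)=(2g,m-2g)$, combined with $\pi(2g)=g+1$ and $\pi(m-2g)\le m-3g+1$, squeezes out $n\le m+1-2g$, whence equality with the general lower bound. For $2g<m\le4g$, a short pigeonhole argument on the Dyck-path form of Clifford locates such a pair inside $[m-2g-1,2g+1]$; that pivot splits the $n$ anti-diagonal positions into two ranges, each bounded by Clifford, and the sum gives exactly $n\le m/2+3/2$. Proving one entry nonzero is a single explicit computation; proving entries are \emph{structurally} zero means ruling out every cancellation---and that is precisely where your differential route stalls.
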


The cases (a) and (b) hold for general complete flags and are easily dealt with in Lemma~\ref{l:mn}. The cases (c) and (d) hold for isometry-dual flags and are proved in Proposition~\ref{p:main}. The converse property is proven in Proposition \ref{c:main}.

\begin{remark}
	According to Theorem \ref{t:main}, admissable pairs $(n,m)$ are of one of four types. They are 
	\begin{enumerate}
		\item[(a)] $(n,m)=(1,0)$.
		\item[(b)] $(n,m)$ is in the convex hull of $(2,2), (2,2g+1), (g+1,2g)$ and $(g+1,3g)$.
		\item[(c)] $(n,m)$ is in the convex hull of $(g+2,2g+1), (g+2,3g+1), (2g+1,4g-1)$ and $(2g+1,4g)$.
		\item[(d)] $(n,m)=(2g+2+a,4g+1+a)$, for $a \geq 0$.
	\end{enumerate}
\end{remark}

For a one-point AG code with a given set of evaluation points, punctured codes of shorter length can be obtained by evaluating in a subset of the evaluation points. For a given complete flag of one-point AG codes, puncturing the flag will induce 
a new complete flag of one-point AG codes of reduced length. In Section 3, 
we give a characterization of isometry-dual flags in terms of maximum sparse ideals. 
And we prove that, given an isometry-dual complete flag, the isometry-dual property holds for a punctured version of the flag 
only if the number of punctured coordinates is a nongap of the Weierstrass semigroup at the defining point. In general, the isometry-dual property is not inherited if a flag is punctured in a single coordinate.

For the case of genus $g=3$, all potentially admissible pairs are located in the following table.

\[
\begin{array}{c}
\begin{array}{rccccccccccccccccccccc} \toprule
&m= &0  & & 1 &2 &3 &4 &5 &6 & &7 &8 &9 &10 &11 &12  & &13 \\ \midrule
n=1 & &\cdot & & & & & & & \\ \noalign{\medskip}
2 & & &  & &\cdot	&\cdot		&\cdot		&\cdot			&\cdot &  &\cdot \\
3 & & &  & & 			&			&\cdot		&\cdot			&\cdot  & & \cdot &\cdot \\
4 & & &  & & 			&  			& 				& &\cdot &		&\cdot &\cdot &\cdot \\ \noalign{\medskip}
5 & & &  & &			&      		& 				& & & &\cdot		&\cdot &\cdot &\cdot \\
6 & & &  & & 			&      		& 				& & & & &			&\cdot &\cdot &\cdot \\
7 & & &  & & 			&      		& 				& & & & &   		& & &\cdot &\cdot  &  \\ \noalign{\medskip}
8 & & &  & & 			&      		& 				& & & & &	& & &       & & &\cdot \\ \bottomrule
\end{array} \\ \noalign{\bigskip}
\end{array}
\]

\begin{example}\label{ex:Hermitian}

We illustrate the results for the Hermitian curve $y^3+y=x^4$ of genus $g=3$ over the field $\ff_9$. The curve has $27$ finite rational points and one point $Q$ at infinity.
The functions $x$ and $y$ have poles at $Q$ of order $3$ and $4$, respectively. The matrix $G$ for the complete flag of one-point AG codes on the curve
is of size $27 \times 27$. The rows of $G$ are labeled by monomials in $x$ and $y$ of increasing pole order. The $27$ monomials are $x^i y^j$, for $0 \leq i \leq 8$, $0 \leq j \leq 2.$
Their pole orders are $0,3,\ldots,24$ (for $j=0$), $4,7,\ldots,28$ (for $j=1$) and $8,11,\ldots,32$ (for $j=2$).
The $2g$ numbers in the range $0,1,\ldots, 32$ not occuring as geometric nongaps are $1,2,5$ and $27,30,31.$
The matrix $G$ satisfies (\ref{eq:gmg}) with $M=I$ and the complete flag of one-point AG codes of length 27 is isometry-dual.  
For the Hermitian curve of genus $g=3$, we mark actual admissible pairs by $\ast$ and the remaining pairs within the range of Theorem \ref{t:main} by a dot.

\[
\begin{array}{c}
\begin{array}{rccccccccccccccccccccc} \toprule
&m= &0 & &1 &2 &3 &4 &5 &6 & &7 &8 &9 &10 &11 &12  & &13 \\
& &1 & &- &- &2 &3 &- &4  & &5 &6 &7 &8 &9 &{10}  & &{11} \\  \midrule
n=1 & &\ast & & & & & & & \\ \noalign{\medskip}
2 & &   &   &  &\cdot &\ast   &\ast &\cdot &\cdot & &\cdot \\
3 & &   & & & & &\ast &\cdot &\ast & & \cdot &\ast \\
4 & &   & & & &  & & &\cdot & &\ast &\cdot &\ast \\  \noalign{\medskip}
5 & &   & & & &      & & & & &\cdot  &\ast &\cdot &\cdot \\
6 & &   & & & &      & & & & & &   &\cdot &\ast &\ast \\
7 & &   & & & &      & & & & &   &  & & &\ast &\ast  &  \\  \noalign{\medskip}
8 & &   & & & &      & & & & &   & & & &       &  & &\ast \\ \bottomrule
\end{array} \\ \noalign{\bigskip}
\end{array}
\]

Note that by (c) and (d) of Theorem \ref{t:main}, for $m\geq 12$, admissible pairs occur only if $n=m+1-2g$. The converse is also true, that is, for the Hermitian curve of genus 3, all pairs $(n,m)$ with $m\geq 12$ and $n=m+1-2g \leq 27$ are admissible.



\end{example}

\section{Isometry-dual flags of linear codes}

For a complete flag of codes $(C_i)_{i=0, \ldots, n}$, the isometry-dual property is defined only in terms of their inclusion of its dual flag modulo equivalence. So, we can extend it to general linear code, not necessarily induced from a curve. 

\begin{definition}
Let $A$ be a $n\times n$ matrix over $\mathbb{F}_q$ of rank $n$. The matrix $A$ is \textit{isometry-dual} if there exists a vector $\mathbf{v} \in (\mathbb{F}_q^\times)^n$ such that the matrix $A\cdot \text{diag }(\mathbf{v}) \cdot A^T$ is a anti-diagonal lower triangular matrix with nonzero anti-diagonal components, i.e.
\begin{equation}\label{mat:diag}
A\cdot \text{diag}(\mathbf{v}) \cdot A^t =
 \left(\begin{array}{ccccc}
 & & & & \star \\
 &\text{\huge 0} & & \star & \\
 & & \reflectbox{$\ddots$} & & \\
 & \star & & \text{\huge $*$} \\
\star& & & &
\end{array}\right)
\end{equation}
where all the anti-diagonal components, i.e. all $(i,j)$-th components with $i+j=n+1$, are nonzero. We call $\mathbf{v}$ a \textit{dualizing vector} and the nonzero anti-diagonal components \textit{pivots}.
\end{definition}

\begin{remark}
	Note that if a complete flag of one-point AG codes is isometry-dual then it has an isometry-dual generator matrix $G$ according to (\ref{eq:gmg}) and the diagonal matrix of a dualizing vector gives the $M$ of (\ref{eq:gmg}).
\end{remark}

\subsection{Proof of Theorem~\ref{t:main}} \label{s:MT}
Let $(C_L(D, m_iQ))_{i=0, \ldots, n}$ be a complete flag of one-point AG codes of length $n$ defined with geometric nongaps $-1 = m_0 < 0 = m_1 < \cdots < m_{n-1} < m_n = m$. The following Lemma \ref{l:mn} and Proposition \ref{p:main} and Proposition \ref{c:main} complete the proof of Theorem \ref{t:main}. 

\begin{lemma} \label{l:mn}
For an effective divisor $D$ of degree $n$, let $m$ be minimal such that $\ell(mQ) - \ell(mQ-D) = \deg D$. Then, in general,
	$m \leq n+2g-1$ and the following hold.
	\begin{enumerate}
		\item[(a)] $n=1$ if and only if $m=0$.
		\item[(b)] $n \leq m/2+1$ for $0 \leq m \leq 2g.$
		\item[(c)] $n \leq m+1-g$ for $m \geq 2g$.
	\end{enumerate}
\end{lemma}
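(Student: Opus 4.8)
The plan is to work directly with the Riemann–Roch theorem applied to the divisors $mQ$ and $mQ-D$, where $\deg(mQ-D) = m-n$. The starting point is the defining property of $m$: it is the smallest integer for which $\ell(mQ) - \ell(mQ-D) = n$, i.e. for which $C_L(D,mQ) = \ff^n$; by Lemma~\ref{lemmacr1cr2} this is equivalent to saying that $m$ is a geometric nongap and that $L(mQ-D) = L((m-1)Q-D)$, with $m$ maximal among geometric nongaps. The general bound $m \le n+2g-1$ should follow because once $\deg(mQ) = m \ge n+2g-1$ we have $\deg(mQ-D) \ge 2g-1$, so both $\ell(mQ)$ and $\ell(mQ-D)$ are computed by Riemann–Roch without an index-of-speciality term, giving $\ell(mQ)-\ell(mQ-D) = (m+1-g) - (m-n+1-g) = n$; and moreover the difference stabilizes at $n$ from that point on, so the minimal such $m$ is at most $n+2g-1$.

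For part (a): if $m = 0$ then $C_L(D,0\cdot Q)$ is the span of the all-ones vector (the image of the constant functions), which has dimension $1$, so $n=1$; conversely if $n=1$ then already $L(0\cdot Q) = \ff$ evaluates onto a one-dimensional code, so $m=0$ by minimality (and $m$ cannot be $-1$ since $m_1 = 0$). For part (b), in the range $0 \le m \le 2g$ I would combine the trivial estimate $\ell(mQ) \le m+1$ (which holds for all $m \ge 0$, since nongaps below $m$ number at most $m+1$ including $0$) with the observation that $\ell(mQ-D) \ge 1$ when $n \le m$, or more carefully bound $\ell(mQ-D)$ from below. Actually the cleaner route: from $n = \ell(mQ) - \ell(mQ-D) \le \ell(mQ)$ and $\ell(mQ) \le m - g + \ell(K-mQ)$... — the efficient argument is that $\ell(mQ) \le \lceil m/2 \rceil + 1$ is false in general, so instead I expect one uses that the $n$ geometric nongaps $m_1 < \cdots < m_n = m$ all lie in $[0,m]$, and Clifford-type counting forces $n \le m/2 + 1$ when $m \le 2g-2$, handling $m \in \{2g-1, 2g\}$ separately via Riemann–Roch. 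For part (c), when $m \ge 2g$ we have $\deg(mQ) \ge 2g$ so $\ell(mQ) = m+1-g$ exactly, hence $n = (m+1-g) - \ell(mQ-D) \le m+1-g$.

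The main obstacle I anticipate is part (b): getting the clean bound $n \le m/2 + 1$ for the full range $0 \le m \le 2g$ rather than just $m \le 2g-2$. For $m \le 2g-2$ the divisor $mQ$ may be special, and one needs a Clifford-inequality argument — roughly $\ell(mQ) - 1 \le m/2$ — together with control of $\ell(mQ-D)$; since $D$ is effective of degree $n$, $mQ-D$ has degree $m-n < m$, and if it is nonspecial or if we track speciality carefully the two Riemann–Roch expressions combine to give the bound. I would organize this by casing on whether $mQ$ and $mQ-D$ are special, using Clifford's theorem in the special cases and plain Riemann–Roch otherwise, and checking that the boundary values $m = 2g-1, 2g$ are consistent with both (b) and (c) (they overlap at $m=2g$, where $n \le g+1$ from both sides). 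The rest is bookkeeping with Riemann–Roch.
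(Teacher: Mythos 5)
Your overall strategy is the paper's: the bound $m\le n+2g-1$ comes from Riemann--Roch at $m=n+2g-1$ (where $\deg(mQ-D)=2g-1$ kills both speciality terms), and parts (b), (c) come from the single inequality $n=\ell(mQ)-\ell(mQ-D)\le \ell(mQ)$ followed by Clifford for $m\le 2g$ and Riemann--Roch for $m\ge 2g$. Part (a) is the trivial computation you give.

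The one place you go astray is part (b), where you assert that $\ell(mQ)\le m/2+1$ ``is false in general'' and therefore detour into counting geometric nongaps and worrying about the speciality of $mQ-D$. Neither is needed: the extended Clifford inequality $\ell(A)\le \tfrac{1}{2}\deg A+1$ holds for \emph{every} effective divisor $A$ with $0\le\deg A\le 2g$. Indeed, if $A$ is special this is Clifford's theorem proper, and if $A$ is nonspecial then $\ell(A)=\deg A+1-g\le \tfrac{1}{2}\deg A+1$ precisely because $\deg A\le 2g$. Applied to $A=mQ$ this gives $n\le\ell(mQ)\le m/2+1$ uniformly on $0\le m\le 2g$, with no separate treatment of $m\in\{2g-1,2g\}$ and no control of $\ell(mQ-D)$ beyond $\ell(mQ-D)\ge 0$. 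Your closing remark about casing on speciality of $mQ$ is exactly this argument; once you drop the extra baggage about $mQ-D$, your proof coincides with the paper's.
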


\begin{proof}
	The upper bound is clear by the Riemann-Roch Theorem, since $\ell(mQ) - \ell(mQ-D) = \deg D$ for $m = n+2g-1$. For the second part, 
	use $\deg D \leq \ell(mQ)$. For $0 \leq m \leq 2g$, $\ell(mQ) \leq m/2+1$ by Clifford's Theorem. 
	For $m \geq 2g$, $\ell(mQ) = m+1-g$.    
\end{proof}

According to the following proposition from \cite{GMRT}, isometry-dual complete flags of one-point AG codes with $n>2g+2$ can be characterized in terms of $m$ being equal to $n+2g-1$.

\begin{proposition}[Proposition 4.3 of \cite{GMRT}]\label{p:GMRT} Suppose $n>2g+2$. Then the following are equivalent for the complete flag $(C_L(D, m_iQ))_{i=0, \ldots, n}$.
	\begin{enumerate}
		\item[(a)] The flag is isometry-dual.
		\item[(b)] $(n+2g-2)Q-D$ is a canonical divisor.
		\item[(c)] $n+2g-1 \in W^*$
	\end{enumerate}
\end{proposition}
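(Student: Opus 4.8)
The plan is to prove the three implications (c)$\Rightarrow$(b)$\Rightarrow$(a)$\Rightarrow$(c); only the last one uses the hypothesis $n>2g+2$.

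\emph{(c)$\Leftrightarrow$(b).} This is pure Riemann--Roch. Since $W^{*}=\{m_{1}<\cdots<m_{n}\}$ and $m_{n}\le n+2g-1$ by Lemma~\ref{l:mn}, the statement $n+2g-1\in W^{*}$ is the same as $m_{n}=n+2g-1$, i.e.\ $\dim C_{L}(D,(n+2g-2)Q)=n-1$. As $\deg((n+2g-2)Q)\ge 2g-1$ we have $\ell((n+2g-2)Q)=n+g-1$, so the condition reads $\ell((n+2g-2)Q-D)=g$; for a divisor of degree $2g-2$ this holds iff it is canonical. Note neither direction uses $n>2g+2$.

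\emph{(b)$\Rightarrow$(a).} First I would record a symmetry of the geometric nongaps. Writing $\phi(a)=\dim C_{L}(D,aQ)=\ell(aQ)-\ell(aQ-D)$ and using that $(n+2g-2)Q-D$ lies in the canonical class, two applications of Riemann--Roch give $\phi(a)+\phi(n+2g-2-a)=n$; differencing, $\phi$ has a jump at $a$ iff it has one at $n+2g-1-a$, and since the jump set of $\phi$ is exactly $W^{*}$ (Lemma~\ref{lemmacr1cr2}), the order-reversing involution $a\mapsto n+2g-1-a$ permutes $W^{*}$, so $m_{k}+m_{n+1-k}=n+2g-1$; hence $m_{k}+m_{l}\le n+2g-2$ for $k+l\le n$ and $m_{k}+m_{l}=n+2g-1$ for $k+l=n+1$. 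Now pick a Weil differential $\eta$ with $(\eta)=(n+2g-2)Q-D$; it has a simple pole with nonzero residue $\lambda_{i}:=\operatorname{res}_{P_{i}}(\eta)$ at each $P_{i}$, no other pole, and a zero of order $n+2g-2$ at $Q$. Let $M=\operatorname{diag}(\lambda_{1},\dots,\lambda_{n})$ and let $f_{1},\dots,f_{n}$ (with $v_{Q}(f_{k})=-m_{k}$) be the functions whose evaluations are the rows of $G$. Then
\[
(GMG^{T})_{k,l}=\sum_{i}\lambda_{i}f_{k}(P_{i})f_{l}(P_{i})=\sum_{i}\operatorname{res}_{P_{i}}(f_{k}f_{l}\eta)=-\operatorname{res}_{Q}(f_{k}f_{l}\eta),
\]
the last step by the residue theorem, since $f_{k}f_{l}\eta$ has poles only at the $P_{i}$ (simple) and at $Q$. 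As $v_{Q}(f_{k}f_{l}\eta)=(n+2g-2)-(m_{k}+m_{l})$, this residue is $0$ for $k+l\le n$ and equals the nonzero product of leading coefficients of $f_{k},f_{l},\eta$ for $k+l=n+1$. So $G$ satisfies (\ref{eq:gmg}) with this $M$ and the flag is isometry-dual.

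\emph{(a)$\Rightarrow$(c).} This is the delicate implication and the only place $n>2g+2$ enters. From isometry-duality, $C_{L}(D,m_{n-1}Q)^{\perp}=C_{1}M$ is $1$-dimensional with full support, so its generator is $(\mu_{1},\dots,\mu_{n})$ with all $\mu_{i}\ne 0$; realizing $C_{L}(D,m_{n-1}Q)^{\perp}$ as the residue code $C_{\Omega}(D,m_{n-1}Q)$ produces a differential $\eta$ with $v_{P_{i}}(\eta)=-1$, $\operatorname{res}_{P_{i}}(\eta)=\mu_{i}$, and $(\eta)\ge m_{n-1}Q-D$. Here $n>2g+2$ forces $m_{n-1}\ge n-2>2g-2$, so $\ell(K-m_{n-1}Q)=0$ and $\dim\Omega(m_{n-1}Q-D)=1$: the differential $\eta$ is \emph{unique} up to scalar. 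Plugging $\eta$ into the standard identification $C_{\Omega}(D,G)=C_{L}(D,(\eta)+D-G)\cdot M$, isometry-duality becomes the flag identity $C_{L}(D,(\eta)+D-m_{n-i}Q)=C_{L}(D,m_{i}Q)$ for all $i$; a Riemann--Roch count using that $(\eta)$ is canonical shows $\dim C_{L}(D,(\eta)+D-aQ)=n-\phi(a)$, so the two flags share the jump set $W^{*}$. The $(1,n)$-entry of $GMG^{T}$ being nonzero gives $v_{Q}(\eta)\le m_{n}-1$, while uniqueness of $\eta$ gives $v_{Q}(\eta)\ge m_{n}-1$ (otherwise $\Omega((v_{Q}(\eta)+1)Q-D)$ would be both one-dimensional and contained in $\langle\eta\rangle$ yet not contain $\eta$); hence $v_{Q}(\eta)=m_{n}-1$ and $(\eta)+D=(m_{n}-1)Q+R$ with $R\ge 0$ of degree $n+2g-1-m_{n}$, supported off $Q$ and $D$. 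What remains is to show $R$ lies in the class of $(\deg R)Q$, equivalently that $(n+2g-2)Q-D$ is canonical, which is (c). The vanishing $(GMG^{T})_{k,l}=-\operatorname{res}_{Q}(f_{k}f_{l}\eta)=0$ for $k+l\le n$ already rules out $m_{n}=m_{k}+m_{l}$ for any such $k,l$ (that residue would be a nonzero product of leading coefficients), and I expect that pushing this sub-anti-diagonal information together with the flag identity at small indices forces $R$ into the class of $(\deg R)Q$. Making this last step precise is the main obstacle; the role of $n>2g+2$ is exactly to make $\eta$ unique and to pin $v_{Q}(\eta)$ down to $m_{n}-1$, and its failure at $n=2g+1$ is consistent with the isometry-dual flags with $m_{n}=n+2g-2$ announced in the abstract.
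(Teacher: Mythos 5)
Your equivalence (b)$\Leftrightarrow$(c) is correct and is the same Riemann--Roch computation the paper attributes to \cite{GMRT} as valid for all $n$; your (b)$\Rightarrow$(a) via the residue pairing with a differential $\eta$ satisfying $(\eta)=(n+2g-2)Q-D$ is also sound: the symmetry $m_k+m_{n+1-k}=n+2g-1$ follows from two applications of Riemann--Roch as you say, and the residue theorem then produces exactly the vanishing/nonvanishing pattern required by (\ref{eq:gmg}). This is essentially the ``if'' direction of the paper's Proposition~\ref{c:main}, phrased with differentials instead of the weight-one-codeword argument used there.

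The genuine gap is in (a)$\Rightarrow$(c). You correctly extract the (for $n>2g+2$, unique up to scalar) differential $\eta$ with $(\eta)+D=(m_n-1)Q+R$, where $R\geq 0$ has degree $n+2g-1-m_n$ and is supported off $Q$ and $D$, and you correctly pin down $v_Q(\eta)=m_n-1$. But the entire content of the implication is the step you explicitly defer: showing $\deg R=0$, equivalently $R\sim(\deg R)Q$. This is precisely the Munuera--Pellikaan ``equality of one-point codes forces equivalence of divisors'' statement that \cite{GMRT} invokes from \cite{MP} and that requires $n>2g+2$; ``pushing the sub-anti-diagonal information'' is not obviously sufficient, and since the implication genuinely fails at $n=2g+1$ (the length-$3$ Hermitian flag of Section~\ref{s:Introduction} is isometry-dual with $n+2g-1\notin W^*$), any completion must use the size of $n$ beyond the uniqueness of $\eta$. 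The paper's own route to (a)$\Rightarrow$(c) is entirely different and purely numerical: Lemma~\ref{l:pivot} shows that every pair $u,v\in W$ with $u+v=m$ yields a nonzero entry of $G\,\mathrm{diag}(\mathbf{v})\,G^T$, which for an isometry-dual flag must lie on the anti-diagonal, and combining this with Clifford's theorem in Dyck-path form (Lemmas~\ref{l:Dyck} and~\ref{l:divisor}) gives the counts of Proposition~\ref{p:main}: $n\leq m/2+3/2<2g+2$ for $m\leq 4g$ and $n=m+1-2g$ for $m\geq 4g$, so $n\geq 2g+2$ forces $m=n+2g-1\in W^*$. That counting argument is what allows the paper to weaken $n>2g+2$ to $n\geq 2g+2$; to salvage your divisor-theoretic approach you would have to reprove the relevant case of \cite{MP}, which is a substantial piece of work in its own right.
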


The proof in \cite{GMRT} for the equivalence of (a) and (b) makes use of a result in \cite{MP} that requires $n>2g+2$. 
The proof for the equivalence of (b) and (c) holds in general. It follows that the equivalence of (a) and (c) holds for $n>2g+2$.
In this section, we use a different approach and we prove the equivalence of (a) and (c) directly under the weaker condition $n\geq 2g+2$.
We will also show that the weaker condition is best possible, that is, the equivalence of (a) and (c) does not hold for $n = 2g+1$. 
In fact the reduced flag of length 3 of Hermitian curve over $\mathbb{F}_{2^2}$ that was presented in Section \ref{s:Introduction} is isometry-dual with $g=1$, $n=2g+1$ and $n+2g-1 \not \in W^\ast$. \\

Properties (c) and (d) in Theorem \ref{t:main} follow from the next proposition. 



\begin{proposition}\label{p:main}
	Let $\big(C_L(D, m_iQ)\big)_{i=0, \ldots, n}$ be a complete flag of isometry-dual one-point AG codes. Let $m=m_n$. Then the following holds.
	\begin{enumerate}
		\item[(a)] $n \leq  \dfrac{m}2 + \dfrac32 < 2g+2$ for $m \leq 4g$. 
		\item[(b)] $n \geq m+1-2g$, with equality $n = m+1-2g$ for $m\geq 4g$.
	\end{enumerate}
\end{proposition}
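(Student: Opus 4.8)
The plan is to prove Proposition~\ref{p:main} by translating the isometry-dual condition into precise statements about pole orders at $Q$ and then combining them with Clifford's theorem, mirroring the bookkeeping in Lemma~\ref{l:mn} but now exploiting the extra symmetry forced by isometry-duality. Concretely, if the flag with generator matrix $G$ (whose $i$th row evaluates a function $f_i \in L(m_iQ)$ with $\mathrm{ord}_Q(f_i) = -m_i$) is isometry-dual with dualizing vector $\mathbf{v}$, then $(G\,\mathrm{diag}(\mathbf v)\,G^T)_{i,j}=0$ for $i+j\le n$ and is nonzero on the anti-diagonal $i+j=n+1$. I would first show that this forces, for each $i+j = n+1$, the product $f_if_j$ to have pole order exactly $m_i+m_j$ equal to a fixed constant: indeed, entry $(i,j)$ is $\sum_k v_k f_i(P_k)f_j(P_k) = \mathrm{ev}_D(f_if_j)\cdot \mathbf v$, and the vanishing of all entries with smaller $i+j$ says $\mathrm{ev}_D(f_if_j)$ lies in the dual of $C_L(D,(m_{n-1-?})Q)$-type spaces; the standard residue/differential argument (as in \cite{GMRT}) identifies the "threshold" pole order. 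The upshot I would extract is: $m_i + m_{n+1-i}$ is the same value $m+1-2g+2g-1 = $ some constant $\kappa$ for all relevant $i$, and in particular $m_1 + m_n = \kappa$, giving $m = m_n = \kappa$ since $m_1 = 0$; more usefully, $m_i + m_{n+1-i} \le m$ for all $i$ with equality constraints.

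The second step is the counting. From $m_1=0 < m_2 < \cdots < m_n = m$ and the pairing $m_i + m_{n+1-i}$ bounded by (and generically equal to) a constant tied to a canonical-type divisor, I get $n$ strictly increasing nonnegative integers that are "folded" about the midpoint. Splitting into the range below $2g$ where Clifford's theorem controls $\ell(m_iQ)$ and the range above $2g$ where Riemann–Roch is exact, I would count how many of the $m_i$ can lie in $[0,2g]$: Clifford forces at most $g+1$ of them there (since $\ell(2gQ)\le g+1$), and the isometry-dual pairing forces the nongaps to be distributed symmetrically enough that the "large" ones are $m, m-1, \ldots$ down to about $2g$ with no gaps (because the Weierstrass gaps all lie in $[1,2g-1]$ and the geometric nongaps above $2g$ are consecutive once we are past $2g+1$). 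Pairing the smallest $m_i$'s against the largest gives $n \le m/2 + 3/2$ in the range $m\le 4g$: here the extra $+1/2$ over Lemma~\ref{l:mn}(b)'s bound $m/2+1$ comes from the fact that an isometry-dual flag cannot have $m_n = m$ with the "defect" all loaded at the top, so one loses roughly half a step. The inequality $m/2+3/2 < 2g+2$ is then just $m < 4g+1$, i.e.\ $m\le 4g$. For part (b), $n \ge m+1-2g$ is immediate from $\deg D = n = \ell(mQ)-\ell(mQ-D) \ge \ell(mQ) - (m - n - g +1)^+$-type estimates, or more simply from $n = \ell(mQ) - \ell(mQ-D)$ and $\ell(mQ)\ge m+1-g$ when $m\ge 2g$ while $\ell(mQ-D) \le \ell((m-n)Q)$; and when $m\ge 4g$ one gets equality because $mQ - D$ has degree $m-n \ge 2g$ so both $\ell$'s are computed by Riemann–Roch exactly, forcing $n - (m+1-2g) = \ell(mQ-D) - (m-n+1-g)$, and the isometry-dual condition (via Proposition~\ref{p:GMRT}(b)-style canonicality, or directly) pins $mQ-D$ down to make this zero.

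I expect the main obstacle to be the first step: rigorously showing that isometry-duality forces $m = m_n$ to be "small" in exactly the quantitative way needed, rather than just $m \le n+2g-1$. The subtlety is that $G\,\mathrm{diag}(\mathbf v)\,G^T$ having the anti-triangular shape is equivalent to $C_{n-i}^\perp = C_i \cdot \mathrm{diag}(\mathbf v)$, and I need to convert "$C_i^\perp$ is (up to diagonal isometry) a one-point AG code in the same flag" into a statement about which pole orders appear — this is where the differential $\omega$ with $(\omega) + D$ supported appropriately enters, and where one must be careful that for $n$ as small as $2g+1$ or in the range $2g < m \le 4g$ the divisor $mQ - D$ need not be canonical, so the clean argument of \cite{GMRT} is unavailable and I must argue with inequalities. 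I would handle this by working with the chain of spaces $L(m_iQ)/L(m_iQ-D)$ and their orthogonals $L((n+2g-2-m_{n-i})Q - D + W)$ for a canonical $W$, tracking dimensions via Riemann–Roch, and extracting the needed bound on $m$ from the requirement that consecutive geometric nongaps exist — i.e.\ that criteria \eqref{cr1}, \eqref{cr2} of Lemma~\ref{lemmacr1cr2} hold for $n$ values. Once the pole-order dictionary is set up, the remaining estimates are the same Clifford/Riemann–Roch split as in Lemma~\ref{l:mn} and should go through routinely.
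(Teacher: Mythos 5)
Your starting point --- reading the entries of $G\,\mathrm{diag}(\mathbf v)\,G^T$ as $\langle \mathbf v,\mathrm{ev}_D(f_if_j)\rangle$ and arguing through pole orders of products --- is the same as the paper's, but the ``upshot'' you extract from it is false, and the rest of your argument is built on it. You claim that isometry-duality forces $m_i+m_{n+1-i}$ to equal a constant $\kappa=m$, or at least $m_i+m_{n+1-i}\le m$. The paper's own length-$3$ example on the genus-$1$ Hermitian curve over $\mathbb{F}_4$ (with $W^*=\{0,2,3\}$, $n=3$, $m=3$) is an isometry-dual flag whose middle pivot sits at the pole-order pair $(2,2)$, with $m_2+m_2=4>3=m$; the genus-$3$ example with $W^*=\{0,3,4,6,7,8,11\}$ likewise has the pivot $(6,6)$ summing to $12\ne 11$. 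The true constraint runs in the opposite direction and concerns \emph{Weierstrass} nongaps rather than geometric ones: if $u,v\in W$ with $u+v=m$, then $f_uf_v$ has a nonzero $f_m$-component while all lower-order components pair to zero against $\mathbf v$, so the corresponding entry is nonzero and must lie on or \emph{below} the anti-diagonal (Lemma~\ref{l:pivot}); that is, $\#\{w\in W^*:w\le u\}+\#\{w\in W^*:w\le v\}\ge n+1$. Your symmetric ``folding'' of the $m_i$ about $m/2$, the claim that the geometric nongaps above $2g$ are consecutive (also contradicted by the paper's length-$27$ Hermitian example, which omits $27,30,31$), and the heuristic that one ``loses roughly half a step'' to get $+3/2$ therefore do not survive.

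What the correct argument needs, and what is missing from your plan, is an \emph{existence} statement: for $2g<m\le 4g$ there is a pair $u,v\in W$ with $u+v=m$ lying in the window $[m-2g-1,2g+1]$, where Clifford's theorem in its Dyck-path form (Lemma~\ref{l:Dyck}) controls the nongap counts on both sides; this is Lemma~\ref{l:divisor}, proved by a pigeonhole argument. Once such a pivot $(u,v)$ is located, every pivot has first coordinate $<u$ or second coordinate $\le v$, and Clifford applied to each half gives $n\le\frac{u-1}{2}+1+\frac v2+1=\frac m2+\frac32$. For (b), your ``simpler'' route uses $\ell(mQ-D)\le\ell((m-n)Q)$, which is false in general (take $mQ-D$ canonical while $(2g-2)Q$ is not), and the appeal to the canonicality statement of Proposition~\ref{p:GMRT} is unavailable precisely in the range you are trying to cover; the paper instead notes that for $m\ge 4g$ both $2g$ and $m-2g$ are nongaps, which pins every pivot onto the line $x+y=m$, and since no two gaps can sum to $m\ge 4g$ there are exactly $m+1-2g$ such pairs, yielding $n=m+1-2g$.
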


\begin{remark}
	In the above proposition $n \geq 2g+2$ occurs only as part of case (b) in which case $m_n=n+2g-1$ and thus $n+2g-1 \in W^\ast
	= \{ m_1, m_2, \ldots, m_n \}.$ This proves that (a) implies (c) in Proposition \ref{p:GMRT} under the weaker condition $n \geq 2g+2.$  
In fact, as the next proposition shows,  the converse (c) implies (a) also holds under the condition $n \geq 2g+2.$ Thus the equivalence of (a) and (c) in Proposition~\ref{p:GMRT} holds under the weaker condition $n\geq 2g+2$. 
\end{remark}

We resume the proof of Proposition \ref{p:main} after the next proposition which proves the converse property in Theorem \ref{t:main}.

\begin{proposition}\label{c:main} For $n\geq 2g+2$, a complete flag $(C_L(D, m_iQ))_{i=0, \ldots, n}$ is isometry-dual if and only if $n+2g-1 \in W^*$.
\end{proposition}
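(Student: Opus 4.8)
The plan is to prove the two implications of the equivalence separately. The forward implication is essentially already in hand: if the flag is isometry-dual then, since $n\ge 2g+2$, part~(a) of Proposition~\ref{p:main} excludes the range $m\le 4g$ (in that range $n<2g+2$), so $m\ge 4g+1$, and then part~(b) of Proposition~\ref{p:main} gives $m=m_n=n+2g-1$; hence $n+2g-1=m_n\in W^*$. This is the observation already recorded in the Remark following Proposition~\ref{p:main}. So the real work is the converse, which I would handle with a single auxiliary Weil differential and the residue description of the dual of an AG code.

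Assume $n+2g-1\in W^*$. Since every geometric nongap is at most $m_n\le n+2g-1$ by Lemma~\ref{l:mn}, this forces $m_n=n+2g-1$. By the equivalence of (b) and (c) in Proposition~\ref{p:GMRT}, whose proof holds without the restriction $n>2g+2$, the divisor $(n+2g-2)Q-D$ is canonical, so I would fix a Weil differential $\omega$ with $(\omega)=(n+2g-2)Q-D$. (One could also argue directly: Lemma~\ref{lemmacr1cr2} turns $m_n=n+2g-1$ into $L((n+2g-1)Q-D)=L((n+2g-2)Q-D)$, and since $(n+2g-2)Q-D$ has degree $2g-2$, Riemann--Roch then forces it to be a canonical divisor.) Since $v_{P_j}(\omega)=v_{P_j}\!\big((n+2g-2)Q-D\big)=-1$ for every $j$, the differential $\omega$ has a simple pole at each $P_j$, so its residues $r_j:=\mathrm{res}_{P_j}(\omega)$ are all nonzero; set $\mathbf{v}=(r_1,\dots,r_n)\in(\mathbb{F}^\times)^n$.

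Next I would use the standard residue description of the dual code: $C_L(D,G)^\perp$ consists of the residue vectors $\big(\mathrm{res}_{P_1}(\eta),\dots,\mathrm{res}_{P_n}(\eta)\big)$ of the Weil differentials $\eta$ with $(\eta)\ge G-D$. Writing $\eta=f\omega$ identifies these with the functions $f\in L\big((n+2g-2)Q-G\big)$, and since such an $f$ is regular at each $P_j$ while $\omega$ has a simple pole there, $\mathrm{res}_{P_j}(f\omega)=r_j f(P_j)$. Hence, for any divisor $G$ with support disjoint from $D$,
\[
C_L(D,G)^\perp=C_L\big(D,(n+2g-2)Q-G\big)\cdot\text{diag}(\mathbf{v}).
\]
Taking $G=m_iQ$ gives $C_i^\perp=C_L\big(D,(n+2g-2-m_i)Q\big)\cdot\text{diag}(\mathbf{v})$ for $i=0,\dots,n$. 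The left-hand side has dimension $n-i$, hence so does $C_L\big(D,(n+2g-2-m_i)Q\big)$; but, by Lemma~\ref{lemmacr1cr2}, the dimension of $C_L(D,aQ)$ increases precisely at the geometric nongaps, so any $C_L(D,aQ)$ of dimension $n-i$ equals the flag code $C_{n-i}=C_L(D,m_{n-i}Q)$. Therefore $C_i^\perp=C_{n-i}\cdot\text{diag}(\mathbf{v})$ for all $i$, equivalently $C_{n-i}^\perp=C_i\cdot\text{diag}(\mathbf{v})$, which says exactly that the dual flag is the image of the original flag under right multiplication by the invertible diagonal matrix $\text{diag}(\mathbf{v})$. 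So the flag is isometry-dual, with dualizing vector $\mathbf{v}$.

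The one genuinely delicate point in the converse is identifying $C_L\big(D,(n+2g-2-m_i)Q\big)$ with the flag code $C_{n-i}$ rather than merely matching dimensions; this uses that $\dim C_L(D,aQ)$ counts the geometric nongaps not exceeding $a$ (part of Lemma~\ref{lemmacr1cr2}) together with the inclusion $C_L(D,m_{n-i}Q)\subseteq C_L(D,aQ)$ whenever $a\ge m_{n-i}$. Everything else is formal: the arithmetic content of the equivalence sits on the forward side, already handled by Proposition~\ref{p:main}, and in the existence of the canonical divisor $(n+2g-2)Q-D$, which is exactly where the hypothesis $n+2g-1\in W^*$ is used.
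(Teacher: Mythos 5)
Your proposal is correct. The forward direction is exactly the paper's: it is dispatched by Proposition~\ref{p:main} together with the remark following it, and you reproduce that reasoning faithfully. For the converse, however, you take a genuinely different route from the paper. Both arguments begin by forcing $m_n=n+2g-1$ and showing that $(n+2g-2)Q-D$ is canonical, but from there the paper stays elementary: it derives the symmetry $m_{n-i}=m-1-m_i$ of the geometric nongaps, produces the dualizing vector $\mathbf{v}$ as a generator of the one-dimensional dual of $C_{n-1}=C_L(D,(m-1)Q)$ (everywhere nonzero because $L((m-1)Q-(D-P))=L((m-1)Q-D)$ rules out weight-one words), and then concludes $C_L(D,m_iQ)\perp_{\mathbf{v}}C_L(D,m_{n-i}Q)$ from the fact that products $fg$ land in $L((m-1)Q)$. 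You instead realize the canonical divisor by an explicit Weil differential $\omega$, take $\mathbf{v}$ to be its residue vector, and invoke the standard residue description $C_L(D,G)^\perp=C_L(D,(n+2g-2)Q-G)\cdot\mathrm{diag}(\mathbf{v})$, finishing with a dimension count that identifies $C_L(D,(n+2g-2-m_i)Q)$ with $C_{n-i}$; the two dualizing vectors are in fact the same object, since the residue vector of $\omega$ spans the dual of $C_L(D,(m-1)Q)$. What each buys: the paper's argument is self-contained and avoids residue calculus entirely, while yours imports the classical duality theorem for AG codes and in exchange sidesteps the explicit symmetry computation $m_{n-i}=m-1-m_i$, replacing it with the dimension-matching step you correctly flag as the one delicate point. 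Both are complete proofs, and neither uses $n\geq 2g+2$ in the converse direction, which is consistent with the paper's examples of isometry-dual flags at $n=2g+1$ with $m<n+2g-1$.
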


\begin{proof}
	Note that $m=m_n$. The only if direction follows from Proposition \ref{p:main} as pointed out in the remark. 
	For the other direction, let $n+2g-1 \in W^*$. Then $m=n+2g-1$ by the Riemann-Roch Theorem. From $\dim C_L(D, (m-1)Q) =n-1$
	it follows that $\dim L((m-1)Q-D) = g$ and therefore that $(m-1)Q-D$ is a canonical divisor. Another application of the Riemann-Roch Theorem then
	shows that $C_L(D,aQ) \neq C_L(D,(a-1)Q)$ if and only if $C_L(D,(m-a)Q) \neq C_L(D,(m-a-1)Q)$, i.e., that $a \in W^\ast$ if and only if $m-a \in W^\ast$. In particular, $m_{n-i} = m-1-m_i$ and $\dim C_L(D, m_i Q) + \dim C_L(D, m_{n-i}Q) = n,$ for $0 \leq i \leq n.$ Since $L((m-1)Q-(D-P)) = L((m-1)Q-D)$, for any point $P \in D$, the code $C_L(D, (m-1)Q)$ has no words of weight one and its dual code of dimension one is therefore the span of an everywhere nonzero vector $v$. Using the vector $v$ as dualizing vector we obtain that
	\[ C_L(D, m_i Q) \perp_{\mathbf{v}} C_L(D, m_{n-i}Q)\]
	for all $0 \leq i \leq n$. Thus the flag is isometry-dual.
\end{proof}

Remark \ref{r:maxsp} in Section \ref{s:maxsp} gives an interpretation for $n+2g-1 \in W^*$ in terns of maximum sparse ideals.

For the proof of Proposition \ref{p:main} we make use of a series of lemmas. 

\begin{lemma}\label{l:pivot}
	For isometry-dual complete flag $(C_L(D, m_iQ))_{i=0, \ldots, n}$, let $u, v \in W$ such that $u+v = m$. Then  the coordinate $(u,v)$ corresponds to a pivot position.
\end{lemma}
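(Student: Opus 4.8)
The plan is to work with the symmetric bilinear form $\beta(f,h) := \langle \mathrm{ev}_D(fh),\mathbf v\rangle = \sum_{k=1}^n v_k f(P_k)h(P_k)$, where $\mathbf v$ is a dualizing vector of the flag and $f,h$ range over functions regular at $P_1,\dots,P_n$. If $f_i$ denotes the row of $G$ having pole order exactly $m_i$ at $Q$, then the entries of $G\cdot\mathrm{diag}(\mathbf v)\cdot G^T$ are precisely the numbers $\beta(f_i,f_j)$, so I must (i) show that the given $u,v$ are geometric nongaps, (ii) show $\beta(f_u,f_v)\neq 0$, and (iii) show that the corresponding matrix position $(i,j)$ satisfies $i+j=n+1$.

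I would first record the following. Since $m_1=0$, the first row of $G$ is the all-ones vector, so $(G\,\mathrm{diag}(\mathbf v)\,G^T)_{a,1}=\langle\mathrm{ev}_D(f_a),\mathbf v\rangle$, which vanishes for $a\le n-1$ by the isometry-dual property. Hence $\mathbf v\perp C_L(D,(m-1)Q)$; as $\mathbf v$ has all coordinates nonzero and $\dim C_L(D,(m-1)Q)^\perp=1$, the vector $\mathbf v$ spans $C_L(D,(m-1)Q)^\perp$. Consequently, for any function $h$ regular on $D$,
\[
\beta(g,h)\neq 0 \iff \mathrm{ev}_D(gh)\notin C_L(D,(m-1)Q).
\]

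Now take $u,v\in W$ with $u+v=m$ and choose functions $g,h$ with a pole only at $Q$ of orders exactly $u$ and $v$. Then $gh\in L(mQ)$ with $\mathrm{ord}_Q(gh)=-m$; since $m=m_n\in W^*$, we may write $gh = c f_n + w$ with $w\in L((m-1)Q)$ and $c\neq 0$ (matching leading coefficients at $Q$). Because $\mathrm{ev}_D(f_n)\notin C_L(D,(m-1)Q)$ while $\mathrm{ev}_D(w)\in C_L(D,(m-1)Q)$, we get $\mathrm{ev}_D(gh)\notin C_L(D,(m-1)Q)$, hence $\beta(g,h)\neq 0$. As a by-product I obtain $u,v\in W^*$: if say $u\notin W^*$, then by Lemma~\ref{lemmacr1cr2} criterion \eqref{cr2} fails, so there is $\phi\in L(uQ-D)$ with $\mathrm{ord}_Q(\phi)=-u$ and zeros along all of $D$; then $\phi h\in L(mQ-D)$ with $\mathrm{ord}_Q(\phi h)=-m$, contradicting $m\in W^*$. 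So $u=m_i$, $v=m_j$ for genuine indices $i,j$, and $(G\,\mathrm{diag}(\mathbf v)\,G^T)_{i,j}=\beta(f_i,f_j)\neq 0$.

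It remains to pin down $(i,j)$. Non-vanishing of the $(i,j)$ entry together with the isometry-dual property forces $i+j\ge n+1$. To exclude $i+j\ge n+2$: then $n+1-i\le j-1$, so $f_{n+1-i}$ is a row of pole order $m_{n+1-i}<m_j=v$, whence $m_i+m_{n+1-i}<u+v=m$, so $f_if_{n+1-i}\in L((m-1)Q)$ and therefore $(G\,\mathrm{diag}(\mathbf v)\,G^T)_{i,\,n+1-i}=0$; but $i+(n+1-i)=n+1$ means this entry is a pivot and hence nonzero, a contradiction. Thus $i+j=n+1$, so $(u,v)$ lies on the anti-diagonal and is a pivot position. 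The main obstacle is precisely this last step: establishing non-vanishing at $(u,v)$ is fairly mechanical, but excluding index sums exceeding $n+1$ is where one must exploit that the pivots form the anti-diagonal rather than merely that the matrix is triangular.
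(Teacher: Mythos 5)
Your proof is correct, and its central step---choosing $g,h$ with pole orders exactly $u$ and $v$ at $Q$, so that $gh$ has pole order exactly $m$ and hence $\mathrm{ev}_D(gh)\notin C_L(D,(m-1)Q)$---is precisely the paper's argument. The paper's proof essentially stops there; you supply three points it leaves implicit: that the dualizing vector spans $C_L(D,(m-1)Q)^\perp$, which is what converts non-membership in $C_L(D,(m-1)Q)$ into non-vanishing of the actual entry of $G\,\mathrm{diag}(\mathbf v)\,G^T$; that $u$ and $v$ must in fact be geometric nongaps (via criterion~(\ref{cr2}) of Lemma~\ref{lemmacr1cr2}), so that $(u,v)$ indexes a matrix position at all; and that the nonzero entry must sit exactly on the anti-diagonal rather than merely on or below it, by playing the zero entry at $(i,n+1-i)$ against the required nonzero pivot there. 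All three completions are sound, and the last one is genuinely needed for the way the lemma is invoked in the proof of Proposition~\ref{p:main}.
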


\begin{proof}
	Let $f_u$ and $f_v$ be functions of weighted degree $u$ and $v$. Then $f_u f_v$ is of the form $\sum_{l\leq m} a_l f_l$ and $\ell(mQ) - \ell((m-1)Q) = 1$. Since the function $f_u f_v$ has degree $m$, the coefficient $a_m$ is nonzero, whence $\text{ev}_D(f_u f_v) \neq 0$.
\end{proof}

Clifford's Theorem on special divisors given as multiples of one point can be stated in terms of numerical semigroups and Dyck paths \cite{BdM}.

\begin{lemma}[Clifford's Theorem / Dyck path formulation]\label{l:Dyck}
The following inequalities hold for $1\leq a \leq 2g$.
	\begin{align*}
	&\# \{ w \in W : 1 \leq w \leq a\} \leq a/2\\
	&\# \{ w \in W : a + 1 \leq w \leq 2g \} \geq \dfrac{2g-a}{2}
	\end{align*}
Or equivalently, 
\begin{align*}
&\text{\# of nongaps in } [1,a] \leq \dfrac{a}2 \leq \text{\# of gaps in } [1,a]\\
&\text{\# of gaps in } [a+1, 2g] \leq \dfrac{2g-a}2 \leq \text{\# of nongaps in } [a+1, 2g].
\end{align*}

\end{lemma}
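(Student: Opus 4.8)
The plan is to derive this directly from the classical Clifford bound $\ell(aQ) \le a/2 + 1$, which holds for $0 \le a \le 2g$ because $aQ$ is a special divisor in that range (its complement $K - aQ$ has nonnegative degree). First I would observe that for a numerical semigroup $W$ with gap set of size $g$, the value $\ell(aQ)$ counts exactly the nongaps in $[0,a]$, i.e. $\ell(aQ) = \#\{w \in W : 0 \le w \le a\} = 1 + \#\{w \in W : 1 \le w \le a\}$, since $0 \in W$ always and the nongaps below $a$ are precisely the pole orders realized by functions in $L(aQ)$. Plugging this into Clifford's inequality gives $\#\{w \in W : 1 \le w \le a\} \le a/2$ immediately, which is the first displayed inequality. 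The equivalent reformulation ``\# of nongaps in $[1,a] \le a/2 \le$ \# of gaps in $[1,a]$'' then follows because the interval $[1,a]$ contains exactly $a$ integers, so (\# nongaps) + (\# gaps) $= a$, and hence \# nongaps $\le a/2$ forces \# gaps $\ge a/2$.

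For the second inequality I would apply the first one at the symmetric point. The total number of gaps of $W$ is $g$, all of them lying in $[1, 2g-1] \subseteq [1,2g]$ (the Weierstrass/Frobenius bound: every integer $> 2g-1$ is a nongap). Split the gap count as $\#\{\text{gaps in } [1,a]\} + \#\{\text{gaps in } [a+1, 2g]\} = g$. From the first part, $\#\{\text{gaps in } [1,a]\} \ge a/2$, hence $\#\{\text{gaps in } [a+1, 2g]\} \le g - a/2 = (2g-a)/2$. Since the interval $[a+1, 2g]$ has $2g - a$ integers, the complementary nongap count is $\ge (2g-a)/2$, which is exactly $\#\{w \in W : a+1 \le w \le 2g\} \ge (2g-a)/2$. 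This handles both displayed chains of inequalities.

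The only genuinely substantive input is Clifford's theorem itself, $\ell(aQ) \le a/2 + 1$ for $0 \le \deg(aQ) \le 2g$; everything else is bookkeeping about semigroups (the identification of $\ell(aQ)$ with a nongap count, the fact that all $g$ gaps lie in $[1,2g-1]$, and complementation within a finite interval). I expect no real obstacle here: the one point to be careful about is the edge case $a = 2g$, where $aQ$ has degree exactly $2g$ and is still special (equality in Clifford can occur, e.g. the hyperelliptic case), so the non-strict inequality is the correct and sharp statement. I would also note in passing that the two inequalities are genuinely equivalent to each other via the canonical involution $w \mapsto 2g-1-w$ on gaps, which is an alternative route but not needed for the proof.
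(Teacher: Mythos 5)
Your proof follows the same route as the paper's: the first inequality is Clifford's bound $\ell(aQ)\leq a/2+1$ read off against the count of nongaps in $[0,a]$, and the second inequality is obtained by complementation using that all $g$ gaps lie in $[1,2g]$. One small correction to your justification: for $a\in\{2g-1,2g\}$ the divisor $aQ$ is \emph{not} special ($K-aQ$ has negative degree, contrary to your parenthetical), and more generally $aQ$ with $a\leq 2g-2$ need not be special either; but in every nonspecial case Riemann--Roch gives $\ell(aQ)=a+1-g\leq a/2+1$ precisely because $a\leq 2g$, so the stated bound holds throughout $0\leq a\leq 2g$ and the rest of your bookkeeping goes through unchanged.
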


\begin{proof}
	The first inequality follows from the Clifford's Theorem applied to the divisor $aQ$. The second one is its complement to the fact that there are exactly $g$ Weierstrass gaps nongaps in $[1, 2g]$.
\end{proof}

\begin{theorem}[Riemann-Roch and Clifford's Theorem]\label{t:RR&C} Consdier the Riemann-Roch space $L(mQ)$ and let $n:=\ell(mQ)$. Then by Reimann-Roch theorem and Clifford's theorem, we get the following inequalities:
	\begin{align*}
	&n \geq m + 1 -g\\
	&n \leq 1 + \dfrac{m}2
	\end{align*}
\end{theorem}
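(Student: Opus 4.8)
The plan is to read the statement as the juxtaposition of the two classical results named in it and to prove the two bounds separately, paying attention to when each is available. For the lower bound $n \ge m+1-g$ I would apply the Riemann--Roch theorem to the divisor $mQ$: writing $K$ for a canonical divisor on $\mathcal{X}$, one has $\ell(mQ) - \ell(K-mQ) = \deg(mQ) + 1 - g = m+1-g$, and since $\ell(K-mQ) \ge 0$ this gives $n = \ell(mQ) \ge m+1-g$ with no restriction on $m$.

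For the upper bound $n \le 1 + m/2$ I would invoke Clifford's theorem, most conveniently through the semigroup reformulation already recorded as Lemma~\ref{l:Dyck}. Since $\ell(aQ) - \ell((a-1)Q)$ equals $1$ when $a$ is a Weierstrass nongap and $0$ otherwise, one has $n = \ell(mQ) = \#\{w \in W : 0 \le w \le m\} = 1 + \#\{w \in W : 1 \le w \le m\}$; applying the first inequality of Lemma~\ref{l:Dyck} with $a = m$ bounds the last count by $m/2$, which yields $n \le 1 + m/2$ (the case $m=0$ being the trivial equality $1 \le 1$). Equivalently, one may cite Clifford directly for $mQ$: since $\ell(mQ) \ge 1$, if $mQ$ is special then $\ell(mQ) \le 1 + \tfrac12 \deg(mQ) = 1 + m/2$, while if $mQ$ is non-special then $\ell(mQ) = m+1-g$, which is at most $1+m/2$ in the range $m \le 2g$.

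The only point that needs care — there is no genuine computation in this proof — is the range of validity. The lower bound holds unconditionally, but the upper bound is really a statement about special divisors: for $m > 2g$ the divisor $mQ$ is non-special and $\ell(mQ) = m+1-g$ exceeds $1+m/2$, so the inequality $n \le 1+m/2$ fails there. I would therefore state and use the second inequality only for $0 \le m \le 2g$, matching the hypothesis of Lemma~\ref{l:Dyck}; the substance of the proof is simply the correct bookkeeping of which of the two classical inequalities supplies which bound, and on which range it is valid.
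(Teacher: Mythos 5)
Your proof is correct and follows exactly the route the paper intends: the paper states this result without a separate proof, treating it as an immediate consequence of Riemann--Roch applied to $mQ$ (giving $n \ge m+1-g$) and Clifford's theorem (giving $n \le 1+m/2$), which is precisely your argument. Your added caveat that the Clifford bound is only valid when $mQ$ is special (in practice $0 \le m \le 2g$) is a correct and worthwhile refinement that the theorem's statement omits but that the paper itself observes in Lemma~\ref{l:mn}(b), where the same inequality is stated with the range restriction.
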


According to the theorem, the possible values of $(n,m)$ are restricted to the white region in Figure~\ref{f:pair1}.

\begin{figure}\begin{center}
	\begin{tikzpicture}[scale=0.9]
\draw[->] (0,-1) -- (12,-1) node[anchor=south] {$m$};
\draw[->] (0,-1) -- (0,-8) node[anchor=east] {$n$};

\draw[smooth, samples=2, domain=5:10] plot (\x,{-\x+4}); 
\draw[smooth, samples=2, domain=-1:-6] plot (-2-2*\x, \x);  

\draw[very thick, smooth, samples=2, domain=10:12] plot (\x,{-\x+4}) node[right]{$n=m+1-g$};
\draw[thick, dotted, smooth, samples=2, domain=-6:-7] plot (-2-2*\x, \x) node[right]{$n=1+\dfrac{m}{2}$};

\draw (5,-1.1)--(5,-0.9) node[above] {$g$};
\draw (10,-1.1)--(10,-0.9) node[above] {$2g$};
\draw (9,-1.1)--(9,-0.9) node[above] {$~~2g-1$};
\draw (8,-1.1)--(8,-0.9) node[above] { $2g-2~~$};
\draw (0,-1.1)--(0,-0.9) node[above] {$0$};
\draw (-0.1, -6) -- (0.1,-6) node[left] {$g+1~$};
\draw (-0.1, -5) -- (0.1,-5) node[left] {$g~$};
\draw (-0.1, -1) -- (0.1,-1) node[left] {$1~$};
\node[above] at (6,-1) {$~~ \cdots$};

\node[draw,circle,inner sep=1.5pt,fill] at (10,-6) {};
\node[draw,circle,inner sep=1.5pt,fill] at (9,-5) {};
\node[draw,circle,inner sep=1.5pt,fill] at (8,-5) {};
\node[draw,circle,inner sep=1.5pt,fill] at (8,-4) {};

\draw[dashed] (10,-1) -- (10,-6);
\draw[dashed] (9,-1) -- (9,-5);
\draw[dashed] (8,-1) -- (8,-5);
\draw[dashed] (0, -6) -- (10, -6);
\draw[dashed] (0, -5) -- (9, -5);

\fill [opacity=0.3, gray!50, domain=5:12, variable=\x]
(5,-1) -- plot ({\x}, {-\x+4}) -- (12, -1) -- cycle;

\fill [opacity=0.3, gray!50, domain=-1:-6, variable=\x]
(0,-1) -- plot ({-2-2*\x, \x}) -- (12,-8 )-- (0, -8) -- cycle;

\end{tikzpicture}\end{center}
\caption{Dimension-degree pairs $(n,m)$ admitted by Theorem~\ref{t:RR&C}}\label{f:pair1}
\end{figure}

\begin{lemma}\label{l:divisor}
	For $m$ satisfying $2g < m \leq 4g$, the interval $[m-2g-1, 2g+1]$ contains $u, v \in W$ such that $u+v=m$.
\end{lemma}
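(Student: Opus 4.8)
The plan is to count Weierstrass nongaps in the interval $[m-2g-1, 2g+1]$ and show, via the Dyck-path form of Clifford's Theorem (Lemma~\ref{l:Dyck}), that this interval contains \emph{enough} nongaps that two of them (possibly equal) must sum to $m$. Set $a = m - 2g - 1$, so $0 < a \leq 2g-1$ since $2g < m \leq 4g$, and the interval in question is $[a, 2g+1]$. First I would note the self-pairing structure: the map $w \mapsto m - w$ is an involution on the set of integers in $[a, 2g+1]$, because $w \in [a, 2g+1]$ forces $m - w \in [m - 2g - 1, m - a] = [a, 2g+1]$. So it suffices to exhibit a nongap $w$ in this interval whose partner $m - w$ is also a nongap; equivalently, it suffices to show that the number of nongaps in $[a, 2g+1]$ strictly exceeds the number of "unpaired" elements, i.e. that not every nongap in the interval can be matched to a gap under $w \mapsto m-w$.

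The key counting step: by Lemma~\ref{l:Dyck}, the number of nongaps in $[a+1, 2g]$ is at least $(2g-a)/2 = (4g+1-m)/2$. Both endpoints $a$ and $2g+1$ of our interval deserve separate attention. Since $m > 2g$, we have $m - (2g+1) = a - 1 < a$, and more usefully $2g+1$ and its partner $m - (2g+1) = a-1$ straddle the interval; I would instead work directly inside $[a, 2g+1]$. Let $N$ be the number of nongaps and $\Gamma$ the number of gaps in $[a, 2g+1]$; the interval has $2g+2-a$ elements, so $N + \Gamma = 2g+2-a$. If no two nongaps in the interval sum to $m$, then $w \mapsto m-w$ maps the $N$ nongaps injectively into the $\Gamma$ gaps (its image lies in the interval and avoids nongaps), giving $N \leq \Gamma$, hence $N \leq g+1 - a/2$. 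On the other hand, counting nongaps: every nongap $\geq 2g+1$ lies in the interval (the semigroup contains all integers $\geq 2g$, and $2g, 2g+1 \in [a, 2g+1]$ since $a \leq 2g-1$), and among the nongaps in $[1, 2g-1]$ the ones in $[1, a-1]$ number at most $(a-1)/2$ by Lemma~\ref{l:Dyck}. Since there are exactly $g$ nongaps in $[1, 2g]$ together with $2g+1$ itself guaranteed, a short accounting shows $N \geq g + 2 - \lceil a/2 \rceil > g+1 - a/2$, contradicting $N \leq \Gamma$.

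The main obstacle I anticipate is pinning down the inequality $N \geq g + 2 - \lceil a/2\rceil$ cleanly, handling the parity of $a$ and the boundary elements $a$, $2g$, $2g+1$ without off-by-one errors: one must be careful that $2g+1$ is always in the interval (true since $a \leq 2g-1 < 2g+1$) and correctly that $2g \in W$ so contributes a nongap inside the interval, while $a-1$ may or may not be a gap. I would organize this by splitting $[1, 2g+1] = [1, a-1] \sqcup [a, 2g+1]$ and using that $[1,2g]$ contains exactly $g$ nongaps plus the certain nongap $2g+1$, so $N = g + 1 - \#\{\text{nongaps in }[1,a-1]\} \geq g + 1 - (a-1)/2$ by Lemma~\ref{l:Dyck}; comparing with $N \leq \Gamma = 2g + 2 - a - N$, i.e. $2N \leq 2g+2-a$, i.e. $N \leq g + 1 - a/2$, yields $g+1-(a-1)/2 \leq g+1-a/2$, i.e. $1/2 \leq 0$, a contradiction. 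Hence two nongaps $u, v \in [a, 2g+1]$ with $u + v = m$ must exist, which is the claim.
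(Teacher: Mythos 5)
Your core argument is sound and is in essence the same as the paper's: both proofs rest on the Dyck-path form of Clifford's Theorem (Lemma~\ref{l:Dyck}) together with a pigeonhole count under the involution $w\mapsto m-w$. The difference is only in bookkeeping: the paper first runs the pigeonhole argument on the smaller interval $[m-2g,2g]$ and, in the boundary case where gaps and nongaps there are equinumerous, separately deduces that $m-2g-1$ is a nongap and pairs it with $2g+1$; you work on the full interval $[m-2g-1,2g+1]$ at once, which lets you replace that case split by a single numerical contradiction ($1/2\le 0$). That is a mild streamlining rather than a different method.

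There is, however, one boundary slip to repair. From $2g<m\le 4g$ you only get $a=m-2g-1\ge 0$, not $a>0$: the value $a=0$ occurs for $m=2g+1$. For $a=0$ your identity $N=(g+1)-\#\{\text{nongaps in }[1,a-1]\}$ undercounts $N$ by one, because the interval $[0,2g+1]$ also contains the nongap $0$, which is lost when you split off $[1,2g+1]$; with the uncorrected formula you would be comparing $N\ge g+1$ against $N\le g+1$, which is not a contradiction. The fix is immediate: either note that for $a=0$ one actually has $N=g+2>g+1\ge\Gamma$, or dispose of $m=2g+1$ at the outset by taking $(u,v)=(0,2g+1)$, both of which lie in $W$ and in $[m-2g-1,2g+1]$. (Relatedly, Lemma~\ref{l:Dyck} is stated for $1\le a\le 2g$, so your application of it to $[1,a-1]$ should be read as vacuous when $a\in\{0,1\}$; this causes no harm once $a=0$ is treated separately.) With that one-line amendment the proof is complete.
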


\begin{proof}
	By Lemma~\ref{l:Dyck}, the Weierstrass nongaps are as many as Weierstrass gaps in the interval $[m-2g, 2g]$. If their numbers are not equal, by the pigeonhole principle, there exists $u, v \in [m-2g, 2g]$ such that $u+v=m$. So, suppose that there are equal number of Weierstrass gaps and nongaps in the interval $[m-2g, 2g]$. Then again by Lemma~\ref{l:Dyck}, the number $m-2g-1$ is a nongap. Then $(u,v) = (m-2g-1, 2g+1)$ satisfies the condition.
\end{proof}

\begin{proof}[Proof of Proposition~\ref{p:main}]
	In the interval $[0, m]$, there are at most $2g$ geometric nongaps. So, $n\geq m+1-2g$ holds in general. If $m\geq 4g$ then $m-2g$ is a nongap, so $(2g, m-2g)$ and $(m-2g, 2g)$ both correspond to pivot positions. Then all pivot positions are of the form $(x,y)$ with $x+y=m$. There are a total of $n=m+1-2g$ such pairs. For $2g < m \leq 4g$, consider the interval $[m-2g-1,2g+1]$. By Lemma~\ref{l:divisor}, there is a pivot $(u,v) = (m-2g+b, 2g-b)$ for some $b\geq 0$ or $(u,v) = (m-2g-1,2g+1)$ corresponds to a pivot. 
	
	If $(u,v) = (m-2g+b, 2g-b)$ corresponds to a pivot, consider the partition $\{(x,y): x < m-2g+b\}$ and $\{(x,y): y \leq 2g-b\}$ of pivots. Using the symmetry of pivot positions, the dimension $n$ is less than the number of Weierstrass nongaps. By Lemma~\ref{l:Dyck} and the partition, we get 
	\begin{align}
	n\leq \dfrac{m-2g+b-1}2 + 1 + \dfrac{2g-b}{2}+1 = \dfrac{m}2 + \dfrac{3}2.\label{eq:bound}
	\end{align}
	If $(u,v) = (m-2g-1, 2g+1)$ corresponds to a pivot, consider the partition $\{(x,y):x < m-2g\}$ and $\{(x,y) : y \leq 2g\}$, which induces the same inequality as (\ref{eq:bound}) by setting $b=0$. 
\end{proof}

\begin{remark}
According to Theorem~\ref{t:main}, possible pairs for $(n,m)$ lie in the white region of the following diagram including the solid boundaries. If $m\geq 4g+1$ then they always lie on the bolded line of $n=m+1-2g$.

\begin{figure}\begin{center}
	\begin{tikzpicture}[scale=0.8]
	\draw[->] (0,-1) -- (14,-1) node[anchor=south] {$m$};
	\draw[->] (0,-1) -- (0,-9) node[anchor=east] {$n$};
	
	\draw[smooth, samples=2, domain=6:13] plot (\x,{-\x+5}); 
	\draw[smooth, samples=2, domain=-1.5:-8] plot (-3-2*\x, \x);  
	
	\draw[very thick, smooth, samples=2, domain=13:14] plot (\x,{-\x+5}) node[below right]{$n=m+1-2g$};
	\draw[thick, dotted, smooth, samples=2, domain=-8:-8.5] plot (-3-2*\x, \x) node[right]{$n=\dfrac32+\dfrac{m}{2}$};

	\draw (6,-1.1)--(6,-0.9) node[above] {$2g$};
	\draw (11,-1.1)--(11,-0.9) node[above] {$4g-1$};
	\draw (12,-1.1)--(12,-0.9) node[above] {$4g$};
	\draw (13,-1.1)--(13,-0.9) node[above] {$4g+1$};
	\draw (0,-1.1)--(0,-0.9) node[above] {$0$};
	\draw (-0.1, -6) -- (0.1,-6) node[left] {$2g~$};
	\draw (-0.1, -7) -- (0.1,-7) node[left] {$2g+1~$};
	\draw (-0.1, -8) -- (0.1,-8) node[left] {$2g+2~$};
	\draw (-0.1, -1) -- (0.1,-1) node[left] {$1~$};
	\draw (-0.1, -2) -- (0.1,-2) node[left] {$2~$};
	\node[above] at (8,-1) {$~~ \cdots$};
	
	\node[draw,circle,inner sep=1.5pt,fill] at (11,-7) {};
	\node[draw,circle,inner sep=1.5pt,fill] at (12,-7) {};
	\node[draw,circle,inner sep=1.5pt,fill] at (13,-8) {};
	\node[draw,circle,inner sep=1.5pt,fill] at (11,-6) {};

	\draw[dashed] (11,-1) -- (11,-7);
	\draw[dashed] (12,-1) -- (12,-7);
	\draw[dashed] (13,-1) -- (13,-8);
	\draw[dashed] (0, -7) -- (12, -7);
	\draw[dashed] (0, -8) -- (13, -8);
	\draw[dashed] (0,-6) -- (11,-6);
	
	\fill [opacity=0.3, gray!50, domain=5:14, variable=\x]
	(6,-1) -- plot ({\x}, {-\x+5}) -- (14, -1) -- cycle;
	
	\fill [opacity=0.3, gray!50, domain=-1:-8, variable=\x]
	(0,-1.5) -- plot ({-3-2*\x, \x}) -- (14,-9) -- (0, -9) -- cycle;
	
	\end{tikzpicture}\end{center}
\caption{Dimension-degree pairs $(n,m)$ admitted by Proposition~\ref{p:main}}\label{f:pair2}
\end{figure}
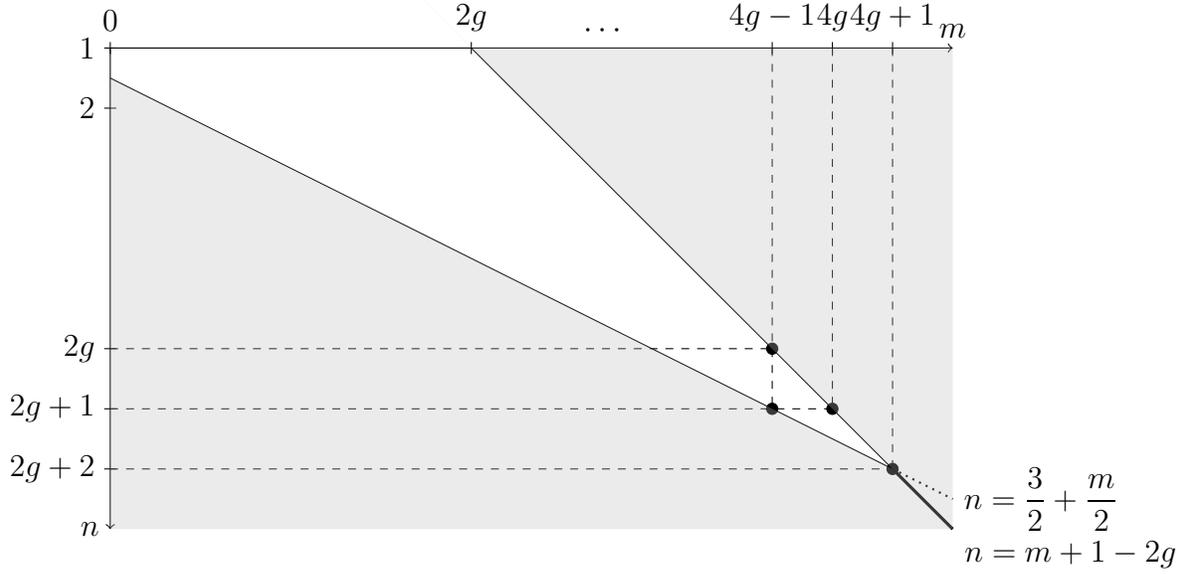
	
\end{remark}

\begin{example}
Let $\mathcal{X}$ be a Hermitian curve defined by the affine equation $y^2+y=x^3$ over $\mathbb{F}_4 = \mathbb{F}_2(\alpha)$ with $\alpha^2 + \alpha+1=0$. Then the genus of $\mathcal{X}$ is $1$. 
\begin{enumerate}
\item $m=4g-1= 3$ and $n=2g+1=3$.

Let $D'= \{ (0,1), (\alpha, \alpha), (\alpha^2, \alpha^2)\} \sim 2 P_{\infty} + (0,0)$. Functions corresponding to the geometric nongaps of the compelte flag $(C_L(D', m_iQ))_{i=0,1,2,3}$ are $1$, $x$, and $y$. Then we get the following generator matrix $G$:
\[
\begin{array}{cccc}
	\toprule
	 &(0,1) & (\alpha, \alpha) & (\alpha^2, \alpha^2)\\
	\midrule
	1~~ & 1 & 1 & 1\\
	x~~ & 0 & \alpha & \alpha^2\\
	y~~ & 1 & \alpha & \alpha^2\\
	\bottomrule
\end{array}
\]
with a dualizing vector $\mathbf{v}=(1,\alpha^2,\alpha)$. Then
\[
G \cdot \text{diag}(\mathbf{v})\cdot G^T = 
\left[ \begin{array}{ccc}
0 & 0 & 1\\
0 & 1 & 1\\
1 & 1 & 0
\end{array}
\right]
\]

\item $m=4g-1=3$ and $n=2g=2$.

Let $D' = \{ (0,0) , (0,1)\}$. For a complete flag $(C_L(D', m_iQ))_{i=0,1,2}$, functions with only pole at $P_\infty$ of order $m_i$ are $1$ and $y$. Then we get the following generator matrix:
\[
\begin{array}{ccc}
\toprule
&(0,0) & (0,1) \\
\midrule
1~~ & 1  & 1\\
y~~ & 0 & 1\\
\bottomrule
\end{array}
\]
with a dualizing vector $\mathbf{v}=(1, 1)$. Then
\[
G \cdot \text{diag}(\mathbf{v})\cdot G^T = 
\left[ \begin{array}{cc}
0 & 1\\
1 &  0
\end{array}
\right]
\]

\end{enumerate}

\end{example}

\begin{example}
The Hemitian curve $\mathcal{X}$ over the field $\mathbb{F}_9$ is defined by the equation $y^3+y = x^4$. This curve has the genus $3$. Let $\mathbb{F}_9 = \mathbb{F}_3[\alpha]$ with $\alpha^2 - \alpha- 1 = 0$.

\begin{enumerate}

\item $m=4g-1=11$ and $n=2g+1=7$.

Let $ D' = \{  (0, \alpha^2), (0, \alpha^6), (1, 2), (\alpha, 1), (\alpha^3, 1), (\alpha^5, \alpha^7), (\alpha^7, \alpha^5) \} \sim 6 P_\infty + (0,0)$. For the complete flag $(C_L(D', m_iP_\infty))_{i=0, \ldots, 7}$, corresponding functions with pole order $m_i$ at $P_\infty$ for $i=1, \ldots, 7$ are $1,~ x,~ y,~ x^2,~ xy,~ y^2$ and $xy^2$. Then we will get the following generator matrix $G$.

\[
\begin{array}{cccccccc}
\toprule
&(0, \alpha^2) &(0, \alpha^6) & (1, 2) & (\alpha, 1) &(\alpha^3, 1) & (\alpha^5, \alpha^7) &(\alpha^7, \alpha^5)\\
\midrule
1~~ & 1 & 1 & 1 & 1 & 1 & 1 & 1\\
x~~ & 0 & 0 & 1 & \alpha & \alpha^3 & \alpha^5 & \alpha^7\\
y~~ & \alpha^2 & \alpha^6 & 2 & 1 & 1 & \alpha^7 & \alpha^5\\
x^2~~ & 0 & 0 & 1 & \alpha^2 & \alpha^6 & \alpha^2 & \alpha^6\\
xy~~ & 0 & 0 & 2 & \alpha & \alpha^3 & \alpha^4 & \alpha^4 \\
y^2~~ & \alpha^4 & \alpha^4 & 1 & 1 & 1 & \alpha^6 & \alpha^2 \\
xy^2~~& 0 & 0 & 1 & \alpha & \alpha^3 & \alpha^3 & \alpha\\
\bottomrule
\end{array}
\]

With a dualizing vector $\mathbf{v}=(1,1,2,\alpha^7, \alpha^5, \alpha, \alpha^3)$, we get
\[ G \cdot \text{diag}(\mathbf{v})\cdot G^T = 
\left[  
\begin{array}{ccccccc}
0 & 0 & 0 & 0 & 0 & 0 & 1 \\
0 & 0 & 0 & 0 & 0 & 1 & 0 \\
0 & 0 & 0 & 0 & 1 & 1 & 2 \\
0 & 0 & 0 & 1 & 1 & 0 & 1 \\
0 & 0 & 1 & 1 & 0 & 2 & 1 \\
0 & 1 & 1 & 0 & 2 & 1 & 0 \\
1 & 0 & 2 & 1 & 2 & 0 & 1
\end{array}
\right]
\]

\item $m=4g=12$ and $n=2g+1=7$.

Let 
$D' = \{  (1,\alpha), (1,\alpha^3), (1,2), (\alpha,1), (\alpha^3,1), (\alpha^5,1), (\alpha^7,1) \} \sim 7 P_\infty$. For the complete flag $(C_L(D', m_iP_\infty))_{i=0, \ldots, 7}$, corresponding functions with pole order $m_i$ at $P_\infty$ for $i=1, \ldots, 7$ are $1,~ x,~ y,~ x^2,~ y^2,~ x^3$ and $x^4$. Then the generator matrix $G$ is
\[
\begin{array}{cccccccc}
\toprule
& (1,\alpha)& (1,\alpha^3)& (1,2)& (\alpha,1)& (\alpha^3,1)& (\alpha^5,1)& (\alpha^7,1)\\
\midrule
1~~ & 1 & 1 & 1 & 1 & 1 & 1 & 1\\
x~~ & 1 & 1 & 1 & \alpha & \alpha^3 & \alpha^5 & \alpha^7\\
y~~ & \alpha & \alpha^3 & 2 & 1 & 1 & 1 & 1 \\
x^2~~ & 1 & 1 & 1 & \alpha^2 & \alpha^6 & \alpha^2 & \alpha^6\\
y^2~~ & \alpha^2 & \alpha^6 & 1 & 1 & 1 & 1 & 1\\
x^3~~ & 1 & 1 & 1 & \alpha^3 & \alpha & \alpha^7 & \alpha^5\\
x^4~~ & 1 & 1 & 1 & \alpha^4 & \alpha^4 & \alpha^4 & \alpha^4 \\
\bottomrule
\end{array}
\]
With a dualizing vector $\mathbf{v} = (\alpha^5,\alpha^7,2,\alpha^2,\alpha^6,\alpha^7,\alpha^5)$, we get \[G \cdot \text{diag}(\mathbf{v}) \cdot G^T =
\left[
\begin{array}{ccccccc}
0 & 0 & 0 & 0 & 0 & 0 & 1\\
0 & 0 & 0 & 0 & 0 & 1 & 1\\
0 & 0 & 0 & 0 & 1 & 0 & 1\\
0 & 0 & 0 & 1 & 0 & 1 & 1\\
0 & 0 & 1 & 0 & 1 & 0 & 1\\
0 & 1 & 0 & 1 & 0 & 1 & 1\\
1 & 1 & 1 & 1 & 1 & 1 & 0
\end{array}\right]
\]

\end{enumerate}

\end{example}

\subsection{Some worked examples}

Note that in the first section, we classified all admissible pairs for
the Hermitian curve of genus 3 for properly chosen $D$ and $Q$. In
this section we consider two examples of complete flag of one-point AG
codes defined over curves of genus 3 and find admissible
pairs. Another example, the Klein curve, will be delt in the later
section. 
  There exist exactly $4$ numerical semigroups of genus $3$.
  Next, for each of these semigroups we associate an example
  of a curve having this semigroup as the Weierstrass semigroup at some point of the curve.

\[
\begin{array}{lll}
	\toprule
	\text{Nongap structure}	&~~~	\text{Type of curve}	&~~~ \text{Reference}\\
	\midrule
	\{0, 3, 4, 6, 7, 8, 9, \ldots\} &~~~ \text{Hermitian curve} &~~~ \text{Example \ref{ex:Hermitian}}\\
	\{0,2,4,6,7,8,9,\ldots\} &~~~ \text{Hyperelliptic curve} &~~~ \text{Example \ref{ex:Hyperelliptic}} \\
	\{0, 4, 5, 6, 7, 8, 9, \ldots\} &~~~  \text{Hyperelliptic curve} &~~~\text{Example~\ref{ex:new curve}}\\
	\{0, 3, 5, 6, 7, 8, 9, \ldots\} &~~~ \text{Klein curve} &~~~ \text{Example \ref{ex:Klein}}  \\
	\bottomrule
\end{array}
\]

Theorem~\ref{t:main} restricts the occurance of admissible pairs in certain range. With the above four curves, we will see that all possible admissible pairs are actually obtained from curves.

\begin{example}[Hyperelliptic curve]\label{ex:Hyperelliptic}
	Consider a hyperelliptic curve given by the equation $y^2 = x^7+x^6-x$ over $\mathbb{F}_7$. The curve has 13 rational points whose coordinates are given by $(0,0)$, $(a,1)$ and $(a,-1)$ for each $a\in \mathbb{F}_7$. It also has one point at infinity, say $Q$. The functions $x$ and $y$ have poles at $Q$ of order $2$ and $7$, respectively. The semigroup of Weierstrass nongaps at $Q$ is given by $\{0, 2, 4, 6, 7, 8, 9, \ldots\}$. The following table shows all admissible pairs $(n,m)$ for isometry-dual flags $C_L(D, m_iQ)_{i=1, \ldots, n}$.	
	
	\[
\begin{array}{c}
\begin{array}{rccccccccccccccccccccc} \toprule
&m= &0 & &1 &2 &3 &4 &5 &6 & &7 &8 &9 &10 &11 &12  & &13 \\
& &1 & &- &2 &- &3 &- &4 & &5 &6 &7 &8 &9 &{10}  & & {11} \\  \midrule
n=1 & &\ast & & & & & & & \\ \noalign{\medskip}
2 & & &  & &\ast & \cdot  & \cdot &\cdot &\cdot & &\ast \\
3 & & &  & & &  &\ast &\cdot &\cdot & &\cdot &\cdot \\
4 & & &  & & &  & & &\ast & &\cdot &\cdot &\ast \\  \noalign{\medskip}
5 & & &  & & &  & & & & &\ast &\ast &\cdot &\cdot \\
6 & & &  & & &  & & & & & &   &\ast &\ast &\ast \\
7 & & &  & & &  & & & & & &   & & &\cdot &\cdot & &  \\  \noalign{\medskip}
8 & & &  & & &  & & & & & &   & & &       & & &\cdot \\ \bottomrule
\end{array} \\ \noalign{\bigskip}
\end{array}
\]
\end{example}

\begin{example}[Hyperelliptic Curve 2]\label{ex:new curve}
	Consider a curve in the projective 4 space over the field $\mathbb{F}_2$ defined by the following affine equations:
	\begin{align*}
&	uv + uw + v^2=0\\
&	u^3 + w^2 + w=0\\
&	u^3 + u^2v + wx=0\\
&	u^3 + u^2 + uv^2 + x^2 + x=0
	\end{align*}
It is an irreducible curve of genus 3 with three points on the curve
$P_1=(0,0,0,0)$, $P_2=(0,0,0,1)$ and $P_3=(0,0,1,0)$. For a point at
infinity $Q(u,v,w,x,t)=(0,0,0,1,0)$, each function $u$, $v$, $w$, and
$x$ has a pole only at $Q$ with the following orders:
\begin{align*}
\text{ord}_Q(u) = -4\\
\text{ord}_Q(v)=-5\\
\text{ord}_Q(w)=-6\\
\text{ord}_Q(x)=-7
\end{align*}

We will get the following evaluation matrix of the functions $1$, $u$, $v$, $w$, and $x$.
\[
\begin{array}{cccc}
\toprule
& P_1 & P_2 & P_3 \\
\midrule
1~~ & 1 & 1 & 1 \\
u~~ & 0 & 0 & 0 \\
v~~ & 0 & 0 & 0 \\
w~~ & 0 & 0 & 1\\
x~~ & 0 & 1 & 0\\
\bottomrule
\end{array}\]
Then $(n,m) = (2,6)$ is an admissible pair, since
\[
\begin{array}{ccc}
\toprule
& P_1 & P_3\\
\midrule
1 & 1 & 1\\
w & 0 & 1\\
\bottomrule
\end{array}
\]
is a self-dual matrix.

Note that with a different model by setting $x = \dfrac{v}{w}$ and $y = \dfrac{v^4}{w^3 u^2}$, the curve is given by 
\[y^2+x^2 y+y = x^7 + x^6.\]
So, it is hyperelliptic. 

\end{example}

\section{Maximum sparse ideals and the inheritance of isometry-dual condition} \label{s:maxsp}

We can see from (d) of Theorem~\ref{t:main} that an admissible pair $(n,m)$ can occur for $n>2g+1$ with $m=n+2g-1$ if we have enough rational points. A question arises here is when a flag of isometry-dual one point AG codes can be realized as a punctured sub-flag of another isometry-dual flag. In this section, we will show that not arbitrary puncturing will induce an isometry-dual flag but it is necesary that the number of punctures points should be in the Weierstrass semigroup of the point $Q$, at which the one-point AG codes are defined. To this end, we introduce the concept of a maximum sparse ideal in the theory of numerical semigroup and then use this to the property of isometry-dual flag and the Weierstrass semigroup of the defining point $Q$.

\subsection{Maximum sparse ideals} \label{s:maxsparse}

Let ${\mathbb N}_0={\mathbb N}\cup \{0\}$. A {\em numerical semigroup} $S$ is a subset of ${\mathbb N}_0$ that contains $0$, is closed under addition and has a finite complement in ${\mathbb N}_0$. For instance, the next set is a numerical semigroup.
$$S=\{0,3,5,6,7,8,9,10, 11, 12, \dots\}$$

The {\em genus} $g$ of a numerical semigroup $S$ is the number $g=\#{\mathbb N}_0\setminus S$.
The {\em conductor} $c$ of $S$ is the smallest integer such that $c+{\mathbb N}_0\subseteq S$.
In the previous example, the genus is $3$ and the conductor is $5$. We call the elements of $S$ by {\em nongaps} and those of $\mathbb{N}_0 \backslash S$ by {\em gaps}.

Denote the elements of $S$, in increasing order, by $\lambda_0=0,
\lambda_1,\lambda_2,\dots$ and, for each $i\geq 0$, define $D(i)=S\cap(\lambda_i-S)$ or, equivalently, define $D(i)$ as the set of nongaps $\lambda_j$ such that $\lambda_i-\lambda_j\in S$. In the running example, $D(7)=\{0,3, 5, 7, 10\}$. 

An {\em ideal} $I$ of a numerical semigroup $S$ is a subset of $S$
such that $I+S\subseteq I$. We say that $I$ is a {\em proper} ideal of $S$ if $I\neq S$. For instance, the next set is an ideal of the previous numerical semigroup.
$$I=\{6, 8, 9, 11, 12, 13, 14, \ldots \}$$

The largest integer not belonging to
an ideal is called the {\em Frobenius number} of the ideal.
The Frobenius number of the previous ideal is $10$.
The next bound on the Frobenius number of an ideal is proved in \cite{BLV}.
\begin{lemma}\label{l:fnbound}The {\em Frobenius number} of an ideal $I$ of a numerical semigroup $S$ of genus $g$
  is at most $2g-1+\# (S\setminus I)$.
\end{lemma}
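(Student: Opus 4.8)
The plan is to estimate the Frobenius number $f$ of the ideal $I$ by controlling how many nongaps of $S$ can lie below $f$ without belonging to $I$. We may assume $I$ is a proper ideal with $I\neq\mathbb N_0$, so that $f$ is a genuine nonnegative integer; when $I=\mathbb N_0$ the statement is vacuous, and when $I=S$ we have $\#(S\setminus I)=0$ and the claim reduces to the classical bound $f\leq 2g-1$ on the Frobenius number of a numerical semigroup of genus $g$.

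The first and main step is a reflection trick around $f$. I would show that the map $a\mapsto f-a$ sends $\{a\in I:a<f\}$ injectively into $\mathbb N_0\setminus S$. Injectivity is clear and $f-a$ is a positive integer; the content is that $f-a\notin S$, since otherwise $f=a+(f-a)$ would lie in $I+S\subseteq I$, contradicting $f\notin I$. As $\#(\mathbb N_0\setminus S)=g$, this yields $\#\{a\in I:a<f\}\leq g$; and since $f\notin I$, also $\#\{a\in I:a\leq f\}\leq g$.

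The second step is a double count of $\{0,1,\dots,f\}$. This set has $f+1$ elements, partitioned into those lying in $I$ and those lying outside $I$; the former number at most $g$ by the first step, and the latter is exactly $\#(\mathbb N_0\setminus I)$ because every integer not in $I$ is $\leq f$ by definition of the Frobenius number. Using the disjoint decomposition $\mathbb N_0\setminus I=(\mathbb N_0\setminus S)\sqcup(S\setminus I)$ we get $\#(\mathbb N_0\setminus I)=g+\#(S\setminus I)$, hence $f+1\leq g+\bigl(g+\#(S\setminus I)\bigr)$ and therefore $f\leq 2g-1+\#(S\setminus I)$, as desired. I do not expect a genuine obstacle here: the argument is elementary once one spots the reflection $a\mapsto f-a$, and the only points needing a word of care are the degenerate cases $I=\mathbb N_0$ and $I=S$, together with the finiteness of $S\setminus I$, which follows from $I+S\subseteq I$ and $S$ having finite complement.
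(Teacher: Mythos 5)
Your proof is correct. Note that the paper does not actually prove this lemma: it is quoted from \cite{BLV} with the proof deferred to that reference, so there is no internal argument to compare against, and a self-contained proof such as yours is a genuine addition. The two steps are sound: the reflection $a\mapsto f-a$ injects $\{a\in I: a<f\}$ into the gap set of $S$ precisely because $f-a\in S$ would force $f\in I+S\subseteq I$, and the count of $\{0,1,\dots,f\}$ as (elements of $I$ below $f$) plus $\#(\mathbb{N}_0\setminus I)=g+\#(S\setminus I)$ then gives $f+1\leq 2g+\#(S\setminus I)$. Two small remarks. First, your main argument already covers $I=S$ (the reflection lands in the gaps and the count goes through verbatim), so the separate appeal to the classical bound $f\leq 2g-1$ is redundant rather than necessary. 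Second, for the Frobenius number to exist at all one needs $I\neq\emptyset$ (an empty set vacuously satisfies $I+S\subseteq I$); you implicitly assume this when you derive finiteness of $\mathbb{N}_0\setminus I$ from $I+S\subseteq I$, and it would be worth one clause to exclude that degenerate case, although there the right-hand side $2g-1+\#(S\setminus I)$ is infinite and the claim is empty anyway.
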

The ideals whose Frobenius number attains this bound will be called {\it maximum sparse ideals}.
In the previous example, the bound $2g-1+\#(S\setminus I)$ is $5+\#\{0, 3, 5, 7, 10\}=10$ which coincides with the Frobenius number. Hence, the ideal $I$ is maximum sparse.

The next lemma characterizes maximum sparse ideals. It is also proved in~\cite{BLV}.
\begin{lemma}\label{l:maxsparse} Given a numerical semigroup $S$ with enumeration $\lambda_0,\lambda_1,\lambda_2,\dots$, let $G(i)$ be the number of pairs of gaps adding up to $\lambda_i$. A proper ideal $I$ of $S$ is maximum sparse if and only if $I=S\setminus D(i)$ for some $i$ with $G(i)=0$.
\end{lemma}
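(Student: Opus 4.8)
The plan is to reduce the equivalence to a counting identity for $\#D(i)$ together with a description of the Frobenius number of the set $S\setminus D(i)$; assume throughout that $g\ge 1$, the case $S=\mathbb N_0$ being trivial. First I would check that $S\setminus D(i)$ is always a proper ideal of $S$: if $a\in S\setminus D(i)$ and $t\in S$, then $a+t\in S$, and $a+t\in D(i)$ is impossible, since it would give $\lambda_i-(a+t)\in S$ and hence $\lambda_i-a=(\lambda_i-a-t)+t\in S$, i.e.\ $a\in D(i)$; properness holds because $0,\lambda_i\in D(i)$. The complement of $S\setminus D(i)$ in $\mathbb N_0$ is the disjoint union of the gap set of $S$, whose largest element is $c-1$ (with $c$ the conductor), and of $D(i)$, whose largest element is $\lambda_i$. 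Hence the Frobenius number of $S\setminus D(i)$ is $\max(c-1,\lambda_i)$, while $\#\bigl(S\setminus(S\setminus D(i))\bigr)=\#D(i)$.

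Next comes the counting identity. Partition $\{0,1,\dots,\lambda_i\}$ according to the type of the pair $(a,\lambda_i-a)$: both nongaps, both gaps, or exactly one gap. The first class has size $\#D(i)$ by the definition $D(i)=S\cap(\lambda_i-S)$; write $q$ for the size of the second class, so that $q=0$ precisely when $G(i)=0$. Since $a\mapsto\lambda_i-a$ is an involution of the index set, the number of $a$ with $a$ a gap and the number of $a$ with $\lambda_i-a$ a gap are equal, both to the number $g_i$ of gaps $\le\lambda_i$; hence the third class has size $2g_i-2q$, and summing gives $\#D(i)=\lambda_i+1-2g_i+q$, with $g_i=g$ as soon as $\lambda_i\ge c$. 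I would then record the one genuinely semigroup-theoretic fact needed: if $G(i)=0$ and $i\ge1$, then $\lambda_i\ge c$. Otherwise, since $[1,\lambda_i-1]$ contains $\lambda_i-i\ge1$ gaps (because $\lambda_i\ge\lambda_1+i-1>i$), there is a largest gap $\gamma<\lambda_i$, and $[\gamma+1,\lambda_i]$ is a run of $\lambda_i-\gamma$ consecutive nongaps; this run has fewer than $m$ elements, $m$ the multiplicity, for otherwise $m$ consecutive nongaps together with $m\in S$ would force the conductor to be at most $\lambda_i$; thus $\lambda_i-\gamma\in\{1,\dots,m-1\}$ is itself a gap and $\lambda_i=\gamma+(\lambda_i-\gamma)$ is a sum of two gaps, contradicting $G(i)=0$.

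The two implications are then short. For the forward direction, let $I$ be a proper maximum sparse ideal, so its Frobenius number is $f=2g-1+\#(S\setminus I)$; since $I\ne S$ we have $\#(S\setminus I)\ge1$, hence $f\ge2g>c-1$, so $f\in S$, say $f=\lambda_i$ with $\lambda_i\ge c$ and $g_i=g$. If $\lambda_j\in D(i)\cap I$ then $\lambda_j+(\lambda_i-\lambda_j)=\lambda_i=f\in I$, which is absurd; therefore $D(i)\subseteq S\setminus I$ and $\#D(i)\le\#(S\setminus I)=\lambda_i-2g+1$. Comparing with $\#D(i)=\lambda_i-2g+1+q$ forces $q=0$, i.e.\ $G(i)=0$, and $\#D(i)=\#(S\setminus I)$, whence $D(i)=S\setminus I$, that is, $I=S\setminus D(i)$. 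For the converse, let $I=S\setminus D(i)$ with $G(i)=0$; by the fact above $\lambda_i\ge c$ when $i\ge1$ (the remaining index $i=0$ is checked directly: $S\setminus D(0)$ is the maximal ideal, which is maximum sparse exactly when $S=\mathbb N_0$), so the Frobenius number of $I$ is $\lambda_i$, while the bound of Lemma~\ref{l:fnbound} equals $2g-1+\#D(i)=2g-1+(\lambda_i+1-2g)=\lambda_i$; equality holds, so $I$ is maximum sparse.

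I expect the main obstacle to be the semigroup-theoretic fact $G(i)=0\Rightarrow\lambda_i\ge c$ for $i\ge1$: it is the one step that uses global structure of $S$—the multiplicity, the conductor, and runs of consecutive nongaps—rather than formal manipulation of $D(i)$. Everything else is bookkeeping around the counting identity $\#D(i)=\lambda_i+1-2g+q$ (valid for $\lambda_i\ge c$) and the elementary determination of the Frobenius number of $S\setminus D(i)$.
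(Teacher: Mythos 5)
Your proof is correct, but there is nothing in the paper to compare it against: the paper does not prove this lemma, it simply cites \cite{BLV} for it. Taken on its own terms, your argument is sound and self-contained. The verification that $S\setminus D(i)$ is a proper ideal, the determination of its Frobenius number as $\max(c-1,\lambda_i)$, and the counting identity $\#D(i)=\lambda_i+1-2g_i+q$ (checked, e.g., on the paper's running example with $\lambda_7=10$, where $\#D(7)=5=10+1-6+0$) are all right, and the forward direction cleanly extracts $q=0$ and $D(i)=S\setminus I$ from the maximum-sparseness equality. The one genuinely structural step --- that $G(i)=0$ and $i\geq 1$ force $\lambda_i\geq c$, via the run of consecutive nongaps $[\gamma+1,\lambda_i]$ having length less than the multiplicity --- is correct and is exactly the kind of argument the paper's subsequent lemma (that leaders are at least the conductor) relies on. You are also right to flag the $i=0$ edge case: for $g\geq 1$ one has $G(0)=0$ but $S\setminus D(0)=S\setminus\{0\}$ has Frobenius number $c-1\leq 2g-1<2g=2g-1+\#D(0)$, so the ``if'' direction of the lemma as literally stated requires $i\geq 1$ (equivalently $\lambda_i\geq c$); this is an implicit convention the paper inherits from \cite{BLV} rather than a gap in your argument.
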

For a maximum sparse ideal $I$ we call its Frobenius number the {\em leader} of the ideal. By the previous lemma, the leader is a nongap $\lambda_i$ such that it is not the sum of any two gaps. Furthermore, the ideal $I$ is then $I=S\setminus D(i)$.

Let us check that the result of Lemma~\ref{l:maxsparse} is satisfied in the running example. On one hand, we already saw that the ideal is maximum sparse. We can check now the equivalent condition. Indeed, $I=S\setminus\{0,3, 5, 7, 10\}$, where $\{0,3,5, 7, 10\}=D(7)$, while $G(7)=0$ because there is no pair of gaps adding up to $\lambda_{7}=10$. In this case, $10$ is the leader of $I$. Note that in the running example, $S\setminus D(10),~ S\setminus D(13),~ S\setminus D(16), \dots$ are also maximum sparse and it will be clear in Example~\ref{ex:Klein}.

\begin{lemma}
The leaders of proper maximum sparse ideals are always at least as large as the conductor.
\end{lemma}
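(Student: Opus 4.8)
The plan is to bound the leader below by $2g$ using only the definition of maximum sparseness together with the fact that a proper ideal omits at least one element of $S$, and then to compare this with the elementary bound $c\le 2g$ on the conductor.

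Concretely, let $f$ denote the leader of a proper maximum sparse ideal $I$; by definition $f$ is the Frobenius number of $I$. Since $I$ is maximum sparse it attains the bound of Lemma~\ref{l:fnbound}, so $f=2g-1+\#(S\setminus I)$. Because $I$ is proper we have $I\neq S$, hence $S\setminus I\neq\emptyset$ and $\#(S\setminus I)\ge 1$; therefore $f\ge 2g$.

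On the other hand, apply Lemma~\ref{l:fnbound} to the (improper) ideal $S$ itself: the Frobenius number of $S$ is then at most $2g-1+\#(S\setminus S)=2g-1$. Since that Frobenius number equals $c-1$ (the conductor minus one is the largest gap of $S$), we obtain $c\le 2g$. Combining the two estimates gives $f\ge 2g\ge c$, which is the assertion.

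I do not expect a genuine obstacle here; the argument is short, and the only thing requiring care is the bookkeeping, namely that ``proper'' is used in the sense $I\neq S$ (so that $\#(S\setminus I)\ge 1$ rather than merely $\ge 0$) and that the Frobenius number of $S$ equals $c-1$. As a byproduct the same reasoning pins down the leader exactly: writing $I=S\setminus D(i)$ with $G(i)=0$ as in Lemma~\ref{l:maxsparse}, the integers missing from $I$ are the gaps of $S$ together with the elements of $D(i)$, so the Frobenius number of $I$ is $\max(c-1,\lambda_i)$; then $f\ge c$ forces $\lambda_i\ge c$ and hence $f=\lambda_i$, recovering the description of the leader used earlier in the section.
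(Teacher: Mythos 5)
Your proof is correct and follows the same route as the paper: the leader equals $2g-1+\#(S\setminus I)\ge 2g$ by maximum sparseness and properness, and $2g\ge c$. The paper takes $c\le 2g$ as known, whereas you derive it by applying Lemma~\ref{l:fnbound} to $S$ itself; that is a valid and tidy extra step, not a different argument.
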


\begin{proof}
  By Lemma~\ref{l:fnbound}, the Frobenius number of $S$ is $2g-1+\#(S\setminus I)$. Since the ideal is proper, $\#(S\setminus I)\geq 1$, and so $2g-1+\#(S\setminus I)\geq 2g\geq c$.
\end{proof}

Next theorem states the relationship between leaders of maximum sparse ideals of a given numerical semigroup when the ideals satisfy inclusion relationships.

\begin{theorem}\label{t:incl}
  For two proper maximum sparse ideals $I,I'$ of a numerical semigroup $S$ with leaders $\lambda_i,\lambda_{i'}$, the following are equivalent:
  \begin{enumerate}
  \item $I'\supseteq I$
  \item $S\setminus I'\subseteq S\setminus I$
  \item $D(i')\subseteq D(i)$
  \item $\lambda_i-\lambda_{i'}\in S$
  \item $\#(S\setminus I)-\#(S\setminus I')\in S$
  \end{enumerate}
\end{theorem}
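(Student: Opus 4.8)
The plan is to run the cycle of equivalences $(1)\Leftrightarrow(2)\Leftrightarrow(3)\Leftrightarrow(4)\Leftrightarrow(5)$, using Lemma~\ref{l:maxsparse} to translate between the ideals and the sets $D(i)$, and Lemma~\ref{l:fnbound} to convert the cardinalities $\#(S\setminus I)$ into the leaders $\lambda_i$. Throughout, write $I=S\setminus D(i)$ and $I'=S\setminus D(i')$, which is legitimate because $I,I'$ are proper maximum sparse ideals with leaders $\lambda_i,\lambda_{i'}$ and Lemma~\ref{l:maxsparse} then gives exactly this form (with $G(i)=G(i')=0$); in particular $S\setminus I=D(i)$ and $S\setminus I'=D(i')$.

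First, $(1)\Leftrightarrow(2)$ is immediate, since $I$ and $I'$ are both subsets of $S$, so inclusion one way is equivalent to reverse inclusion of the complements taken inside $S$. Then $(2)\Leftrightarrow(3)$ is just the substitution $S\setminus I=D(i)$, $S\setminus I'=D(i')$ noted above. For $(3)\Leftrightarrow(4)$ I would use the description $\lambda_k\in D(j)\iff \lambda_j-\lambda_k\in S$: assuming $\lambda_i-\lambda_{i'}\in S$, any $\lambda_k\in D(i')$ has $\lambda_{i'}-\lambda_k\in S$, hence $\lambda_i-\lambda_k=(\lambda_i-\lambda_{i'})+(\lambda_{i'}-\lambda_k)\in S$ by closure under addition, so $\lambda_k\in D(i)$; conversely, $\lambda_{i'}\in D(i')$ since $\lambda_{i'}-\lambda_{i'}=0\in S$, so $D(i')\subseteq D(i)$ forces $\lambda_{i'}\in D(i)$, i.e. $\lambda_i-\lambda_{i'}\in S$.

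Finally, for $(4)\Leftrightarrow(5)$ I would observe that for a maximum sparse ideal the leader is by definition its Frobenius number, so equality in Lemma~\ref{l:fnbound} reads $\lambda_i=2g-1+\#(S\setminus I)=2g-1+\#D(i)$, and likewise $\lambda_{i'}=2g-1+\#D(i')$; subtracting gives $\lambda_i-\lambda_{i'}=\#(S\setminus I)-\#(S\setminus I')$, so the integer in $(4)$ literally equals the integer in $(5)$ and one lies in $S$ exactly when the other does.

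The whole argument is essentially bookkeeping once Lemmas~\ref{l:fnbound} and~\ref{l:maxsparse} are available; the only point requiring a moment's care is the identity $\lambda_i=2g-1+\#D(i)$, i.e. that for a maximum sparse ideal the Frobenius bound is attained precisely at the nongap $\lambda_i$, but this is exactly the combination of the definition of ``maximum sparse'' with Lemma~\ref{l:maxsparse}, so I do not anticipate a genuine obstacle.
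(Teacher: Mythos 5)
Your proposal is correct and follows essentially the same route as the paper: the chain $(1)\Leftrightarrow(2)\Leftrightarrow(3)\Leftrightarrow(4)\Leftrightarrow(5)$ using Lemma~\ref{l:maxsparse} to identify $S\setminus I$ with $D(i)$ and Lemma~\ref{l:fnbound} to write $\lambda_i=2g-1+\#(S\setminus I)$. The only difference is that you spell out the direction $(4)\Rightarrow(3)$ via closure under addition, which the paper leaves implicit in the phrase ``which, in turn, is equivalent to statement (4).''
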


\begin{proof}
  It is obvious that statements (1) and (2) are equivalent.

  By Lemma~\ref{l:maxsparse}, since $I,I'$ are proper maximum sparse ideals, $D(i)=S\setminus I$ and $D(i')=S\setminus I'$. Hence, statement (2) and statement (3) are equivalent.

  Statement (3) is equivalent to $\lambda_{i'}\in D(i)$ which, in turn, is equivalent to statement (4).
  
  Statements (4) and (5) are equivalent since, by Lemma~\ref{l:fnbound}, $\lambda_i=2g-1+\#(S\setminus I)$ and $\lambda_{i'}=2g-1+\#(S\setminus I')$. Hence, $\lambda_i-\lambda_{i'}=\#(S\setminus I)-\#(S\setminus I')$.
  \end{proof}

\begin{remark}
Suppose that $\lambda_i$ is the leader of a maximum sparse ideal and that $\lambda_{i'}$, which is at least the conductor, satisfies $\lambda_i-\lambda_{i'}\in S$. This does not imply that $\lambda_{i'}$ is the leader of any maximum sparse ideal, unless $D(i')=0$.
\end{remark}

The next lemma shows that the leaders of the sparse ideals of a numerical semigroup constitute in turn another ideal of the numerical semigroup.

\begin{lemma} 
The set $L$ of non-zero nongaps $\lambda_i$ such that $G(i)=0$ is an ideal of $S$.
\end{lemma}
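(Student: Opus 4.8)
The plan is to show directly that $L+S\subseteq L$; since $L$ consists of non-zero nongaps, this is precisely the assertion that $L$ is an ideal of $S$. So I would fix $\lambda_i\in L$ and $\lambda_j\in S$ and prove $\lambda_i+\lambda_j\in L$. The case $\lambda_j=0$ is trivial (and if $g=0$ there are no gaps, $L=\mathbb N$, and the statement is immediate), so I assume $\lambda_j\geq 1$. Writing $\lambda_i+\lambda_j=\lambda_k$, this is a non-zero nongap, and what has to be shown is that $G(k)=0$, i.e.\ that $\lambda_k$ is not a sum of two gaps.

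Two preliminary observations drive the argument. First, since $c$ is the conductor, every integer $\geq c$ lies in $S$, so every gap is at most $c-1$ and hence any sum of two gaps is at most $2c-2$. Second, because $\lambda_i$ is a non-zero nongap with $G(i)=0$, Lemma~\ref{l:maxsparse} identifies $S\setminus D(i)$ as a proper maximum sparse ideal with leader $\lambda_i$ (proper since $0\in D(i)$), so by the lemma asserting that the leader of a proper maximum sparse ideal is at least the conductor we get $\lambda_i\geq c$. Equivalently, $L$ is exactly the set of leaders of proper maximum sparse ideals, and all of these are $\geq c$.

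Now I would assume for contradiction that $\lambda_k=\ell+\ell'$ with $\ell\leq\ell'$ both gaps. Then $\lambda_i+\lambda_j=\lambda_k\leq 2c-2$, which together with $\lambda_i\geq c$ yields $\lambda_j\leq c-2$; in particular $\lambda_j<\lambda_i$. Therefore
\[
\ell'\;\geq\;\tfrac12\lambda_k\;=\;\tfrac12(\lambda_i+\lambda_j)\;>\;\lambda_j,
\]
so $\ell'-\lambda_j\geq 1$, and $\ell'-\lambda_j$ must be a gap, since otherwise $\ell'=(\ell'-\lambda_j)+\lambda_j$ would be a sum of two nongaps, hence a nongap, contradicting that $\ell'$ is a gap. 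But then $\lambda_i=\lambda_k-\lambda_j=\ell+(\ell'-\lambda_j)$ exhibits $\lambda_i$ as a sum of two positive gaps, contradicting $G(i)=0$. Hence $\lambda_k\in L$, and $L$ is an ideal.

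The only delicate point — the step I expect to be the real obstacle — is justifying that one may subtract $\lambda_j$ from the larger gap $\ell'$ and land on a gap: a priori one could have $\ell'<\lambda_j$. The bound $\lambda_i\geq c$ for leaders of maximum sparse ideals is exactly what removes this difficulty, forcing $\lambda_j\leq c-2<\lambda_i$ and hence $\ell'\geq\tfrac12(\lambda_i+\lambda_j)>\lambda_j$. If one wished to avoid invoking that bound, the same conclusion can be reached by a short case analysis according to whether $\ell'\geq\lambda_j$ or $\ell'<\lambda_j$, the latter case again being eliminated by $\lambda_i+\lambda_j\leq 2c-2$.
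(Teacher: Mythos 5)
Your proof is correct and follows essentially the same route as the paper's: both arguments first establish that every element of $L$ is at least the conductor $c$, and then, given two gaps summing to $\lambda_k=\lambda_i+\lambda_j$, subtract $\lambda_j$ from one of them (which stays a gap) to exhibit $\lambda_i$ as a sum of two gaps, contradicting $G(i)=0$. The only divergence is in the preliminary step: the paper proves $\lambda_i\geq c$ directly, by showing that any non-zero nongap below the conductor is itself a sum of two gaps, whereas you deduce it from Lemma~\ref{l:maxsparse} together with the bound on leaders of proper maximum sparse ideals --- a legitimate shortcut given the surrounding results.
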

  
\begin{proof}
First of all, notice that the nongaps smaller than the conductor do not satisfy $G(i)=0$. Indeed, if $\lambda_i<c$, there is a gap $a<\lambda_i$ with $\lambda_i-a<\lambda_1$ because otherwise $\lambda_i$ would be larger than the conductor. Now, $\lambda_i-a<\lambda_1$ is a positive gap which, together with $a$ adds up to $\lambda_i$. So, $G(i)\neq 0$. So, all the elements in $L$ are at least equal to the conductor.
  
We need to prove that if $\lambda_i\in L$ then $\lambda_i+\lambda_j\in L$ for any $\lambda_j\in S$. Assume that $\lambda_j\neq 0$. Let $k$ be such that $\lambda_i+\lambda_j=\lambda_k$. Suppose that $\lambda_k\not\in L$ and so that $G(k)\neq 0$. Then there are two gaps $a$, $b$ such that $a+b=\lambda_k$. Note that both $a, b < \lambda_i = \lambda_k - \lambda_j$ since $\lambda_i$ is greater or equal to $c$. From $a+b=\lambda_k$, we have $\lambda_j < a, b < \lambda_i$. Then $a-\lambda_j$ is a gap of $S$ since, otherwise, $a=(a-\lambda_j)+\lambda_j\in S+S\subseteq S$.
In  particular, $(a-\lambda_j)+b$ is a sum of two gaps which adds up to $a+b-\lambda_j=\lambda_k-\lambda_j=\lambda_i$, a contradiction to $G(i)=0$.
\end{proof}

\subsection{Puncturing sequences of isometry-dual one-point AG codes}

Let $\mathcal{X}$ be a smooth absolutely irreducible projective curve $\mathcal{X}$ of genus $g$ over a finite field. Let
\[C_i := C_L(D, m_iQ)_{i=0, \ldots, n}\]
 be a complete flag of one-point AG codes. Recall that $W$ is the Weierstrass semigroup at $Q$ and $W^*=\{m_1, \ldots, m_n\}$ is the set of geometric nongaps. Also, note that $W^*\subset W$.

Next lemma is stated in different words in \cite[Corollary 3.3.]{GMRT} with the condition $n>2g+2$. The same proof is still valid with $n\geq 2g+2$.

\begin{lemma}The set $W\setminus W^*$ is an ideal of $W$.
\end{lemma}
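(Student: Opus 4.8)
The plan is to show that $W \setminus W^*$ satisfies the defining property of an ideal, namely that if $a \in W \setminus W^*$ and $s \in W$ then $a + s \in W \setminus W^*$. Equivalently, since $W^* \subseteq W$ and $W^*$ is automatically a subset of $W$, it suffices to prove the contrapositive: if $a + s \in W^*$ for some $s \in W$, then $a \in W^*$; in other words, the set $W^*$ of geometric nongaps is ``closed under subtraction of semigroup elements'' within $W$. The natural tool is the characterization of geometric nongaps in Lemma~\ref{lemmacr1cr2}, which says $a \in W^*$ iff $L(aQ) \neq L((a-1)Q)$ and $L(aQ - D) = L((a-1)Q - D)$.

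First I would fix $a \in W$ and $s \in W$, set $b = a + s$, and assume $b \in W^*$, i.e. $L(bQ) \neq L((b-1)Q)$ and $L(bQ - D) = L((b-1)Q - D)$. I want to deduce the two criteria for $a$. The first criterion, $L(aQ) \neq L((a-1)Q)$, is immediate because $a \in W$ is a Weierstrass nongap. The substance is the second criterion $L(aQ - D) = L((a-1)Q - D)$. For this I would argue by contradiction: suppose $L(aQ - D) \supsetneq L((a-1)Q - D)$, so there is a function $h$ with a pole of order exactly $a$ at $Q$ and vanishing on $D$ (more precisely $h \in L(aQ - D) \setminus L((a-1)Q-D)$). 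Pick a function $f_s$ with pole order exactly $s$ at $Q$ (which exists since $s \in W$), and consider the product $h \cdot f_s$. This function has pole order exactly $a + s = b$ at $Q$ and still vanishes on $D$ (since $h$ does and $f_s$ has support of poles only at $Q$, disjoint from $D$). Hence $h f_s \in L(bQ - D) \setminus L((b-1)Q - D)$, contradicting $L(bQ - D) = L((b-1)Q - D)$.

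The remaining care is bookkeeping about exactly which ambient spaces the functions live in and why ``pole order exactly $a$'' plus ``vanishing on $D$'' survives multiplication by $f_s$; this is the step I expect to require the most attention, though it is essentially the same multiplicativity argument used in Lemma~\ref{l:pivot}. One should also double-check the edge cases where $a$ is small (e.g. $a = 0$, $s = 0$) so the statement reduces to triviality, and confirm that $h f_s$ is nonzero, which is clear since $\mathcal{X}$ is irreducible. Putting the two criteria together, Lemma~\ref{lemmacr1cr2} gives $a \in W^*$, completing the contrapositive and hence the proof that $W \setminus W^*$ is an ideal of $W$. I would also note that this argument is exactly the one alluded to as ``the same proof is still valid with $n \geq 2g+2$,'' since nothing above uses any lower bound on $n$ beyond what is implicit in having a complete flag.
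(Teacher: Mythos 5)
Your proof is correct and is essentially the paper's own argument: the paper takes $i \in W\setminus W^*$, $j \in W$, picks $g \in L(iQ-D)\setminus L((i-1)Q-D)$ and $f \in L(jQ)\setminus L((j-1)Q)$, and observes $fg \in L((i+j)Q-D)\setminus L((i+j-1)Q-D)$, which is exactly your multiplicativity step with the logic run directly rather than via the contrapositive. The details you flag as needing care (pole orders adding, vanishing on $D$ being preserved since $f$ is regular off $Q$) go through just as you describe.
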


\begin{proof}
	Let $i\in W\backslash W^*$ and $j\in W$. Then by statements (\ref{cr1}) and (\ref{cr2}) of Lemma~\ref{lemmacr1cr2}, we have the following relations.
	\begin{align*}
	&L(iQ) \gneq L((i-1)Q) \\
	&L(iQ-D) \gneq L((i-1)Q-D)\\
	&L(jQ) \gneq L((j-1)Q-D)
	\end{align*}
	Let $g\in L(iQ-D)\backslash L((i-1)Q-D)$ and $f\in L(jQ)\backslash L((j-1)Q)$. Then $fg \in L((i+j)Q-D) \backslash L((i+j-1)Q-D)$. Thus $i+j$ does not satisfy the statement (\ref{cr2}), so it is not in $W^*$. 
\end{proof}

\begin{remark} \label{r:maxsp}
	Note that $n+2g-1 \in W^*$ is equlvalent to $W \setminus W^*$ being maximum sparse. Indeed, if $n+2g-1 \in W^*$, it is the maximum in $W^*$ by applyig Riemann-Roch to $C_L(D, (n+2g-1)Q)$. Then the Frobenius number $F$ of $W\setminus W^*$ satisfies $F=n+2g-1$ and $| W\setminus (W\setminus W^*) | = |W^*| = n$. Conversely, the ideal $W\setminus W^*$ being maximum sparse implies that $F=2g-1+|W^*| = n+2g-1$. Then $F \in W^*$. 
\end{remark}

Consider a complete flag of one-point AG codes $(C_L(D, m_iQ))_{i=0, \ldots, n}$. Using a notational abuse, we can consider $D$ as an ordered set. Let $D' \subseteq D$ with $|D'| = s \leq n$. By reordering the rational points $P_1, \ldots, P_n$, if necesary,  we may assume that $D'=P_1+P_2 + \cdots + P_s$. Construct a complete flag of one-point AG codes $(C_L(D_s, m'_iQ))_{i=0, \ldots, s}$ and call it a {\em punctured flag}. Let $W'$ denote the set of geometric nongaps of the induced complete flag. 

\begin{lemma}
	In the above construction we have $W' \subset W^*$.
\end{lemma}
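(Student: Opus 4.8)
The plan is to show that every geometric nongap $m'_i$ of the punctured flag $(C_L(D', m_i'Q))_{i=0,\dots,s}$ already occurs as a geometric nongap of the original flag, i.e.\ $W' \subseteq W^*$. By Lemma~\ref{lemmacr1cr2}, an element $a \in W$ lies in $W'$ exactly when
\[
L(aQ-D') \neq L((a-1)Q-D'), \qquad L(aQ) \neq L((a-1)Q),
\]
and lies in $W^*$ exactly when $L(aQ-D) = L((a-1)Q-D)$ together with $L(aQ) \neq L((a-1)Q)$. So what must be ruled out is an element $a \in W'$ for which $L(aQ-D) \neq L((a-1)Q-D)$; equivalently, I must show: if there is a function $f$ with pole divisor exactly $aQ$ at $Q$ (pole order $a$) vanishing on all of $D$, then already every such $f$ that vanishes on $D'$ has pole order at $Q$ strictly less than $a$ — no, more precisely, I must show that $L(aQ-D') = L((a-1)Q-D')$ whenever $L(aQ-D) \neq L((a-1)Q-D)$.

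The cleanest route is contrapositive: suppose $a \in W'$, so there exists $f \in L(aQ-D') \setminus L((a-1)Q-D')$, i.e.\ $f$ vanishes on $P_1,\dots,P_s$ and has pole order exactly $a$ at $Q$. I want to conclude $a \in W^*$, which (given $a \in W$, automatic since the pole order criterion $L(aQ)\neq L((a-1)Q)$ holds) reduces to $L(aQ-D) = L((a-1)Q-D)$. Suppose not, so there is $h \in L(aQ-D) \setminus L((a-1)Q-D)$: $h$ vanishes on all of $D = P_1 + \dots + P_n$ and has pole order exactly $a$ at $Q$. Then $f$ and $h$ have the same (exact) pole order $a$ at $Q$, so some linear combination $f - c\,h$ with $c \in \ff^\times$ lies in $L((a-1)Q - D')$: it has pole order $< a$ at $Q$ and still vanishes on $D'$ (since both $f$ and $h$ do). That contradicts $f \notin L((a-1)Q-D')$. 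Hence no such $h$ exists, $L(aQ-D) = L((a-1)Q-D)$, and $a \in W^*$.

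I expect the only subtlety is making the pole-order cancellation argument precise: "pole order exactly $a$ at $Q$" means $f \in L(aQ) \setminus L((a-1)Q)$, and since $\dim L(aQ)/L((a-1)Q) \leq 1$, the images of $f$ and $h$ in this one-dimensional quotient are proportional, giving the scalar $c$ with $f - c\,h \in L((a-1)Q)$; combined with $f, h$ both vanishing on $D'$ this places $f - c\,h \in L((a-1)Q - D')$, and then $f = c\,h + (f - c\,h) \in L((a-1)Q - D')$, the desired contradiction. The argument uses only Lemma~\ref{lemmacr1cr2} and the basic one-dimensionality of successive quotients of Riemann--Roch spaces along $Q$, so no genus hypothesis or bound on $s$ is needed. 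This also matches intuition: puncturing only kills geometric nongaps (those $a$ where $aQ - D$ fails to be ``as special as possible''), never creates new ones, so $W' \subseteq W^*$.
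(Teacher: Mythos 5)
Your proof breaks at the very first step: you have negated criterion (\ref{cr2}) of Lemma~\ref{lemmacr1cr2} when applying it to the punctured flag. By that lemma, $a\in W'$ means $L(aQ)\neq L((a-1)Q)$ \emph{together with the equality} $L(aQ-D')=L((a-1)Q-D')$ --- exactly parallel to the condition you correctly state for membership in $W^*$. So when $a\in W'$ there is \emph{no} function $f\in L(aQ-D')\setminus L((a-1)Q-D')$; the existence of such an $f$ (a function of pole order exactly $a$ at $Q$ vanishing on all of $D'$) is precisely what certifies that $a$ is \emph{not} a geometric nongap of the punctured code. The intermediate claim you then try to establish --- that the existence of $f\in L(aQ-D')\setminus L((a-1)Q-D')$ is incompatible with the existence of $h\in L(aQ-D)\setminus L((a-1)Q-D)$ --- is in fact false: since $D\geq D'$, any such $h$ itself serves as such an $f$. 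Correspondingly, the final step of your contradiction is invalid: from $f-c\,h\in L((a-1)Q-D')$ you cannot conclude $f=c\,h+(f-c\,h)\in L((a-1)Q-D')$, because $c\,h$ has pole order exactly $a$ at $Q$ and therefore does not lie in $L((a-1)Q-D')$.

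With the criterion stated correctly the lemma is easier than what you attempted. If $a\in W'$ then $L(aQ-D')=L((a-1)Q-D')$; since $D\geq D'$ we have $L(aQ-D)\subseteq L(aQ-D')=L((a-1)Q-D')$, so every $h\in L(aQ-D)$ has pole order at most $a-1$ at $Q$ and hence lies in $L((a-1)Q-D)$; thus $L(aQ-D)=L((a-1)Q-D)$ and, as (\ref{cr1}) already holds, $a\in W^*$. The paper argues instead at the level of codes: for $\lambda\in W'$ it picks $f\in L(\lambda Q)$ with $\mathrm{ev}_{D'}(f)\notin C_L(D',(\lambda-1)Q)$ and observes that coordinate projection onto the $D'$-positions carries $C_L(D,(\lambda-1)Q)$ into $C_L(D',(\lambda-1)Q)$, so $\mathrm{ev}_D(f)$ cannot lie in $C_L(D,(\lambda-1)Q)$ and $\lambda\in W^*$. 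Either route works and neither needs a genus hypothesis, but your argument as written does not.
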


\begin{proof}
  Let $\lambda:=m'_i \in W'$ for some $i$. It is enough to show
  $C_L(D, (\lambda-1)Q) < C_L(D, \lambda Q)$. Let $f\in L(\lambda
  Q)\setminus L((\lambda-1)Q)$. Then $\text{ev}_D(f)$ is not generated
  by $\text{ev}_D(g)$'s for $\text{ord}_Q(g) > -\lambda$ since linear combination of all such $\text{ev}_D(g)$ cannot generate the $\text{ev}_D(f)$ in its $D'$ positions.
\end{proof}

\begin{theorem}\label{t:inh}
	Assume that $n, s \geq 2g+2$. Let $(C_L(D, m_iQ))_{i=0, \ldots, n}$ be an isometry-dual flag of one-point AG codes. If the punctured flag $(C_L(D', m'_iQ))_{i=0, \ldots, s}$ is isometry-dual then $n-s \in W$.
\end{theorem}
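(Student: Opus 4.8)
The plan is to combine Remark~\ref{r:maxsp} with the theory of maximum sparse ideals from Section~\ref{s:maxsparse}. Since $n \geq 2g+2$ and the unpunctured flag is isometry-dual, Proposition~\ref{c:main} gives $n+2g-1 \in W^*$, and then Remark~\ref{r:maxsp} tells us that $W \setminus W^*$ is a maximum sparse ideal of $W$ with Frobenius number (leader) equal to $n+2g-1$ and with $\#(W \setminus (W\setminus W^*)) = \#W^* = n$. Similarly, since $s \geq 2g+2$ and the punctured flag $(C_L(D',m_i'Q))_{i=0,\dots,s}$ is isometry-dual, the same reasoning applied to the Weierstrass semigroup at $Q$ — which is still $W$, as puncturing does not change the curve or the point $Q$ — shows that $W \setminus W'$ is a maximum sparse ideal of $W$ with leader $s+2g-1$ and $\#W' = s$. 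Here $W'$ is the set of geometric nongaps of the punctured flag, which by the preceding lemma satisfies $W' \subset W^*$.

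The next step is to apply Theorem~\ref{t:incl} to the two proper maximum sparse ideals $I := W \setminus W^*$ and $I' := W \setminus W'$. From $W' \subset W^*$ we get $W \setminus W^* \subseteq W \setminus W'$, i.e. $I \subseteq I'$, which is statement~(1) of Theorem~\ref{t:incl} (with the roles of $I,I'$ as there). The equivalence with statement~(5) then yields
\[
\#(W \setminus I') - \#(W \setminus I) \in W,
\]
that is, $\#W' - \#W^* \in W$; wait — we want the difference the other way. Since $I \subseteq I'$ we have $\#(W\setminus I) \geq \#(W\setminus I')$, so statement~(5) of Theorem~\ref{t:incl} as stated reads $\#(W\setminus I) - \#(W\setminus I') \in W$, i.e. $\#W^* - \#W' \in W$, which is exactly $n - s \in W$. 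Equivalently, using statement~(4), $\lambda_i - \lambda_{i'} = (n+2g-1) - (s+2g-1) = n-s \in W$. Either route closes the argument.

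The main thing to verify carefully is that both ideals are genuinely \emph{proper} maximum sparse ideals, so that Theorem~\ref{t:incl} applies: properness of $W \setminus W^*$ needs $W^* \neq \emptyset$ (clear, as $0 \in W^*$), and likewise for $W \setminus W'$; and the maximum-sparse property for each is precisely the content of Remark~\ref{r:maxsp}, which requires the hypotheses $n \geq 2g+2$ and $s \geq 2g+2$ respectively — this is where those hypotheses are used. A secondary point worth spelling out is that the Weierstrass semigroup at $Q$ used for the punctured flag coincides with the one for the original flag; this is immediate since $W$ depends only on $\mathcal{X}$ and $Q$, not on the chosen evaluation divisor. I expect no real obstacle beyond bookkeeping: the substance is entirely carried by Remark~\ref{r:maxsp} and Theorem~\ref{t:incl}, and the proof is essentially a two-line deduction once those are in place.
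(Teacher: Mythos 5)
Your proposal is correct and follows essentially the same route as the paper: the paper's own proof is just the two sentences "$W'\subseteq W^*$ by the previous Lemma; then by Theorem~\ref{t:incl}, we get $n-s\in W$," and you have filled in exactly the implicit steps it relies on (Proposition~\ref{c:main} plus Remark~\ref{r:maxsp} to see that $W\setminus W^*$ and $W\setminus W'$ are proper maximum sparse ideals with leaders $n+2g-1$ and $s+2g-1$, then equivalence (1)$\Leftrightarrow$(4) or (5) of Theorem~\ref{t:incl}). The direction of the subtraction that you briefly second-guessed does come out right, since $I=W\setminus W^*\subseteq I'=W\setminus W'$ matches statement (1) with $\#(W\setminus I)-\#(W\setminus I')=n-s$.
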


\begin{proof}
Note that $W'\subseteq W^*$ by the previous Lemma. Then by Theorem~\ref{t:incl}, we get $n-s \in W$. 
\end{proof}

\begin{remark}
	Note that the converse of the above theorem does not hold in general. That is, there is no gaurantee that the induced flag will also have the isometry-dual property just by puncturing $n-s$ random points from $D$.
\end{remark}

One conclusion of Theorem~\ref{t:inh} is that, from an isometry-dual flag of one-point AG codes, to get a punctured flag with isometry-dual condittion, one needs to take out a number
of evaluation points at least equal to the multiplicity (smallest
non-zero nongap) of the Weierstrass semigroup of $Q$.
The next example shows this property in Klein curve. From an isometry-dual flag of length 23, by properly puncturing three points repeatedly, we can get induced isometry-dual flags of length 20, 17, 14, and all the way down to of length 2.

\begin{example}[Klein Curve]\label{ex:Klein}
	The Klein curve $\mathcal{X}$ is given by $X^3Y + Y^3 Z + Z^3 X=0$ in a projective plane. There are 3 $F_2$-rational poins of $\mathcal{X}$, namely, $Q_1 = (1:0:0)$, $Q_2 = (0:1:0)$, and $Q_3= (0:0:1)$. Let $\mathbb{F}_8 = \mathbb{F}_2(w)$ such that $w^3 + w + 1= 0$. Then the $\mathbb{F}_8$-rational points on $\mathcal{X}$, which are not $Q_i$ for $i=1,2,3$ can be expressed in the following way:
	
\[	
\begin{array}{lll}
	P_1=( 1, w, 1 ),
	&P_2=( 1, w^2, 1 ),
	&P_3=( 1, w^4, 1 ),\\
	P_4=( 1, 1, w ),
	&P_5=( 1, w^4, w ),
	&P_6=( 1, w^5, w ),\\
	P_7=( 1, 1, w^2 ),
	&P_8=( 1, w, w^2 ),
	&P_9=( 1, w^3, w^2 ),\\
	P_{10}=( 1, w^3, w^3 ),
	&P_{11}=( 1, w^4, w^3 ),
	&P_{12}=( 1, w^6, w^3 ),\\
	P_{13}=( 1, 1, w^4 ),
	&P_{14}=( 1, w^2, w^4 ),
	&P_{15}=( 1, w^6, w^4 ),\\
	P_{16}=( 1, w^2, w^5 ),
	&P_{17}=( 1, w^3, w^5 ),
	&P_{18}=( 1, w^5, w^5 ),\\
	P_{19}=( 1, w, w^6 ),
	&P_{20}=( 1, w^5, w^6 ),
	&P_{21}=( 1, w^6, w^6 )
\end{array}
\]
	
	The above points are chosen in a way that $Q_2$, $P_{3i+1}$, $P_{3i+2}$, and $P_{3i+2}$ are colinear for $i=0, \ldots, 6$. The curve $\mathcal{X}$ satisfies the following properties:
	
	\begin{enumerate}
		\item The genus of $\mathcal{X}$ is 3.
		\item There are 24 $\mathbb{F}_8$-rational points, all of which are flexpoints.
		\item Denote $T_P$ the tangent line of the curve $\mathcal{X}$ at $P$. Then $I(P, \mathcal{X} \cap T_P) = 3$ for all 24 $\mathbb{F}_8$-rational points $P$ of $\mathcal{X}$. Then there is one point in $\mathcal{X}\cap T_P$ other than $P$. Denote this point by $P'$. Note that $I(P', \mathcal{X}\cap T_P) = 1$. 
		\item Write $P'' = (P')'$. Then $(P'')' = P$, that is $I(P, \mathcal{X}\cap T_{P''})=1$.
		
		\item With the previous notation for $P$, $P'$, $P''$, the canonical divisor $L$ satisfies
		\[ L \sim 3P + P' \sim 3P' + P'' \sim 3P'' + P\]
		for any $\mathbb{F}_8$-rational point $P$.
		
		\item For any $F_8$-rational points $R_1$ and $R_2$ such that $R_1 \neq R_2'$ and $R_2 \neq R_1'$, consider a line through $R_1$ and $R_2$. Then then line meets $\mathcal{X}$ at 4 distinct points including $R_1$ and $R_2$. Denote the other 2 points by $R_3$ and $R_4$. 
		
		\item With the previous notation for $R_1$, $R_2$, $R_3$, and $R_4$, the canonical divisor $L$ satisfies
		\[ L\sim R_1 + R_2 + R_3 + R_4\]
		
	      \item
The projective plane over ${\mathbb F}_8$ contains $73$ lines. Of these lines, $24$ lines intersect the Klein curve three times in a rational point and once in a different rational point, $42$ lines pass through four rational points, and the remaining seven lines do not contain rational points. Counting lines in two different ways it is verified that ${24 \choose 2} = 24 \cdot 1 + 42 \cdot 6$.
		
		\item The Weierstrass semigroup at a point $Q_3$ is $\{ 0, 3, 5, 6, 7, 8, 9, \ldots\}$.
	\end{enumerate}

	Let $Q=Q_3 = (0:0:1)$. Then $Q' = Q_2=(0:1:0)$ and $Q''=Q_1=(1:0:0)$. Consider one-point AG codes at $Q$. 
	
	Define $D_2 := Q' + Q''$. Then 
	\begin{align*}
	K+ D_2  & \sim 3Q + Q' + Q' + Q'' = 3Q + (3Q' + Q'') - Q'\\
	& \sim 3Q + K - Q'\\
	& \sim 6Q.
	\end{align*}
	The geometric nongaps of the complete flag $(C_L(D_2, m_iQ))_{i=0,1,2}$  are  $\{0, 7\}$ with corresponding functions

	\begin{align*}
	1\in L(0Q)\backslash L(0Q-D_2)\\
	Y^2Z/X^3 \in L(7Q)\backslash L(7Q-D_2).
	\end{align*}
	Then we get an isometry-dual flag of length 2 with generator matrix $G_2$
	\[\begin{array}{ccc}
	\toprule
	~~& Q'	& Q'' \\
	\midrule
	1~~ & 1	& 1 \\
	Y^2Z/X^3~~ & 1	& 0\\
	\bottomrule
	\end{array}
	\]
	
	Define $D_5 := Q'+Q''+P_1 + P_2 + P_3$. Note that $P_1$, $P_2$, $P_3$, and $Q'$ are colinear. Then
	\begin{align*}
	K + D_5 & \sim 2K + Q'' \\
	& \sim 6Q + 2Q' + Q'' = 6Q + (3Q' + Q'') - Q'' \\
	& \sim 6Q + K - Q'\\
	& \sim 6Q + 3Q + Q' - Q' = 9Q
	\end{align*}
	and $D_5 - D_2 \sim 3Q$.  The geometric nongaps of the complete flag $(C_L(D_5, m_iQ))_{i=0, \ldots, 5}$ are $\{0, 3, 5, 7, 10\}$.
The functions with the corresponding pole orders are
	\begin{align*}
	Z/X \in &L(3Q)\backslash L(3Q-D_5) \\
	YZ/X^2 \in &L(5Q)\backslash L(5Q-D_5)\\
	(YZ/X^2)^2 \in &L(10Q)\backslash L(10Q-D_5).
	\end{align*}
We get an isometry-dual flag of length 5 with generator matrix $G_5$
	\[
	\begin{array}{cccccc}
	\toprule
	&Q'&Q''&P_1&P_2&P_3 \\
	\midrule
	1~~&1&1&1&1&1 \\
	Z/X~~&0&0&1&1&1 \\
	YZ/X^2~~&0&0&w&w^2&w^4 \\
	Y^2Z/X^3~~&1&0&w^2&w^4&w \\
	(YZ/X^2)^2~~&0&0&w^2&w^4&w \\
	\bottomrule
	\end{array}
	\]
	
	Then the dualizing vector of the above matrix is $(1, 1, w, w^2, w^4)$. \\
	
	Define $D_8 := D_5 + P_4 + P_5 + P_6$. Then
	\begin{align*}
	K+D_8 & = K+D_5 + P_4 + P_5 + P_6 \\
	& \sim 9Q + K - Q' \\
	& \sim 12Q
	\end{align*}
	with $D_8 - D_5 \sim 3 Q$.	The geometric nongaps of the flag $(C_L(D_8, m_iQ))_{i=0, \ldots, 8}$ are $\{0, 3, 5, 6, 7, 8, 10, 13\}$. The functions of pole orders corresponding to geometric nongaps are
	\begin{align*}
	(Z/X)^2 \in & L(6Q)\backslash L(6Q-D_8) \\
	YZ^2/X^3 \in & L(8Q)\backslash L(8Q-D_8) \\
	Y^2Z^3/X^5 \in & L(13Q) \backslash L(13Q-D_8)
	\end{align*}
	
Then we get an isometry-dual complete flag $(C_L(D_8, m_iQ))_{i=0, \ldots, 8}$ with generator matrix $G_8$
	\[
	\begin{array}{ccccccccc}
	\toprule
	&Q'&Q''&P_1&P_2&P_3&P_4&P_5&P_6 \\
	\midrule
1~~&  			1&   1&   1&   1&   1&   1&   1&   1\\
Z/X~~&  		0&   0&   1&   1&   1&   w&   w&   w\\
YZ/X^2~~&  		0&   0&   w& w^2& w^4&   w& w^5& w^6\\
(Z/X)^2~~&  	0&   0&   1&   1&   1& w^2& w^2& w^2\\
Y^2Z/X^3~~& 	1&   0& w^2& w^4&   w&   w& w^2& w^4\\
YZ^2/X^3)~~&    0&   0&   w& w^2& w^4& w^2& w^6&   1\\
(YZ/X^2)^2~~&   0&   0& w^2& w^4&   w& w^2& w^3& w^5\\
Y^2Z^3/X^5~~&   0&   0& w^2& w^4&   w& w^3& w^4& w^6\\
	\bottomrule
	\end{array}
	\]
	with a dualizing vector $\mathbf{v}=( 1,1, w^6, 1, w^2, w, w^5, w^6)$.
	
	Matrices $G_2$ and $G_5$ can be realized as submatrices of $G_5$ and $G_8$ respectively. We can continue this up to a $23\times 23$ matrix according to the following lemma and corresponding functions with proper pole orders at $Q$. 
		
	\begin{lemma}
		Let $n=3i+2$ and $D_n=Q'+Q''+P_1 + \cdots + P_{3i}$. Then
		\[K+D_n \sim (n+2g-2)Q\]
		for $i=0,\ldots, 7$.
	\end{lemma}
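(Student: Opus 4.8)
The plan is to reduce the statement to the single linear equivalence
\[
P_{3j+1}+P_{3j+2}+P_{3j+3}\ \sim\ 3Q,\qquad j=0,1,\dots,6,
\]
and then add up the contributions. First I would justify this equivalence. By the way the rational points were enumerated, the four points $Q',P_{3j+1},P_{3j+2},P_{3j+3}$ are collinear, so their sum is the intersection divisor of a line with the smooth plane quartic $\mathcal{X}$ and is therefore linearly equivalent to a canonical divisor $K$ (since $K_{\mathcal{X}}\sim\mathcal{O}_{\mathcal{X}}(1)$ for a plane quartic; this is also recorded among the stated properties of the Klein curve in Example~\ref{ex:Klein}). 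Combining with the relation $K\sim 3Q+Q'$ for $Q=Q_3$, $Q'=Q_2$ listed there, this yields $P_{3j+1}+P_{3j+2}+P_{3j+3}\sim K-Q'\sim 3Q$.

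Next, writing $n=3i+2$, we have
\[
D_n=Q'+Q''+\sum_{j=0}^{i-1}\bigl(P_{3j+1}+P_{3j+2}+P_{3j+3}\bigr)\ \sim\ Q'+Q''+3iQ,
\]
so $K+D_n\sim K+Q'+Q''+3iQ$. It then remains only to simplify $K+Q'+Q''$. Using $K\sim 3Q+Q'$ once more, together with $3Q'+Q''\sim K\sim 3Q+Q'$ (again from the canonical divisor relations of the example), which rearranges to $2Q'+Q''\sim 3Q$, I obtain $K+Q'+Q''\sim 3Q+2Q'+Q''\sim 6Q$. Hence $K+D_n\sim 6Q+3iQ=(3i+6)Q$; since $g=3$ this is precisely $(n+2g-2)Q$, as claimed. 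The restriction $i\le 7$ only ensures that the $3i$ points $P_1,\dots,P_{3i}$ all occur among the $21$ listed rational points. Equivalently, one can run an induction on $i$ mirroring the worked cases in the example: the base case is the computation $K+D_2\sim 6Q$ already done there, and the inductive step uses $D_{3(i+1)+2}=D_{3i+2}+P_{3i+1}+P_{3i+2}+P_{3i+3}$ with $P_{3i+1}+P_{3i+2}+P_{3i+3}\sim 3Q$ to pass from $(n+2g-2)Q$ to $(n+3+2g-2)Q$.

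I do not anticipate a genuine obstacle: every geometric ingredient — that the canonical class of $\mathcal{X}$ is cut out by lines, the collinearity of the enumerated $P_k$'s with $Q'$, and the chain $K\sim 3Q+Q'\sim 3Q'+Q''$ — is already among the properties of the Klein curve stated in Example~\ref{ex:Klein}, so the argument is a short sequence of divisor-class manipulations. The only point requiring a little care is keeping the three distinguished points $Q=Q_3$, $Q'=Q_2$, $Q''=Q_1$ straight and not interchanging the roles of $3P+P'$ and $3P'+P''$ when invoking those relations.
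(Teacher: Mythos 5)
Your proof is correct and rests on the same two ingredients as the paper's: the collinearity of $Q',P_{3j+1},P_{3j+2},P_{3j+3}$ making $Q'+P_{3j+1}+P_{3j+2}+P_{3j+3}$ a canonical (line-section) divisor, hence $P_{3j+1}+P_{3j+2}+P_{3j+3}\sim K-Q'\sim 3Q$, together with the base computation $K+Q'+Q''\sim 6Q$. The paper phrases this as an induction on $i$ (with the cases $i=0,1,2$ worked out earlier in the example), which is exactly the variant you describe at the end, so the two arguments are essentially identical.
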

	
	\begin{proof}
	  It was proven above for the case $i=0,1,2$.
                    From the construction of $P_i$ for $i=1, \ldots, 21$, we know that $Q'$, $P_{3i+1}$, $P_{3i+2}$, and $P_{3i+3}$ are colinear for $i=0, \ldots, 6$. Then
		\begin{align*}
		K+D_{n+3} & = K+D_n + P_{3i+1} + P_{3i+2} + P_{3i+3} \\
		& \sim (n+2g-2)Q + L - Q' \\
		& \sim (n+3+2g-2)Q
		\end{align*}
		where $L$ denotes the canonical divisor $Q'+P_{3i+1} + P_{3i+2} + P_{3i+3}$.
	\end{proof}
	Hence finding sets of points which are colinear on a line conatining $Q'$ will give a set of divisors satisfying the isometry-dual condition on the corresponding one point AG codes. For $D_{23}$ the geometric nongaps are $\{0, 3, 5, 6, 7, \ldots, 22, 23, 25, 28\}$. Then we get a sequence of isometry-dual flags 
	\[ (C_L(D_2, m_iQ))_{i=0, 1, 2} < (C_L(D_5, m_iQ))_{i=0, \ldots, 5} < (C_L(D_8, m_i Q))_{i=0, \ldots, 8} < \cdots  < (C_L(D_{23}, m_iQ))_{i=0, \ldots, 23}\]
	each of which can be obtained by properly puncturing the next flag. This agrees with Theorem~\ref{t:inh} that the difference in length $3, 6, 9, \ldots, 21$ are in $W$. Admissible pairs of the Klein curves are given by the following table.

	\[
\begin{array}{c}
\begin{array}{rccccccccccccccccccccc} \toprule
&m= &0 & &1 &2 &3 &4 &5 &6 & &7 &8 &9 &10 &11 &12 & &13 \\
& &1 & &- &- &2 &- &3 &4 & &5 &6 &7 &8 &9 &{10} & &{11} \\  \midrule
n=1 &  & \ast  & & & & & & & \\ \noalign{\medskip}
2 & & &     &  &\cdot  &\ast    &\cdot  &\ast  &\cdot  & &\ast  \\
3 & & &  & & & &\cdot  &\ast  &\ast  & &\ast  &\cdot  \\
4 & & &  & & &  & & &\cdot  & &\ast  &\ast  &\ast  \\  \noalign{\medskip}
5 & & &  & & &      & & & & &\ast  &\cdot  &\ast  &\ast  \\
6 & & &  & & &      & & & & & &   &\cdot  &\cdot  &\ast  \\
7 & & &  & & &      & & & & & &   & & &\cdot  & \ast & &  \\  \noalign{\medskip}
8 & & &  & & &      & & & & & &   & & &       & & & \ast \\ \bottomrule
\end{array} \\ \noalign{\bigskip}
\end{array}
\]

	
\end{example}

\appendix
\section{Appendices}

\subsection{Reed Muller type code} \label{s:RM}

Since the isometry-dual property of a complete flag of codes can be defined with respect to the generator matrix, we consider an example of codes not necessarily defined over a curve. Consider the affine space $\mathbb{F}_2^m$. It is an $m$ dimensional vector space over $\mathbb{F}_2$, so elements are of the form $(\alpha_m, \alpha_{m-1}, \ldots, \alpha_1)$ where each $\alpha_j \in \mathbb{F}_2 = \{0,1\}$ for $j=1, \ldots, m$. Let $x_{\alpha_j}$ be the coordinate functions for $j=1, \ldots, m$, that is, $x_{\alpha_j} ( \alpha_m, \alpha_{m-1}, \ldots, \alpha_1) = \alpha_j$ for $j=1, \ldots, m$. Note that $x_{\alpha_j}^2 = x_{\alpha_j}$. Then the coordinate ring is $R=\mathbb{F}_2 [ x_1, x_2, \ldots, x_m]/I$ for $I=(x_1^2 - x_1, x_2^2 - x_2, \ldots, x_m^2 - x_m)$, which is, as a set, a set of square free monomials in $x_1, \ldots, x_m$. Let $\alpha=(\alpha_m, \alpha_{m-1}, \ldots, \alpha_1)$ and $\beta=(\beta_m, \beta_{m-1}, \ldots, \beta_1)$ be vectors in $\mathbb{F}_2^m$. We write $x^\alpha$ for the function $x_1^{\alpha_1} x_2^{\alpha_2}\cdots x_m^{\alpha_m}$. An order on $R$ is defined by $x^\alpha < x^\beta$ if and only if
\begin{enumerate}
	\item Either $\sum \alpha_i < \sum \beta_i$
	\item or $\sum \alpha_i = \sum \beta_i$ and $\exists j$ such that $\alpha_j =0$, $\beta_j=1$ and $\alpha_k=\beta_k$ for all $k=j+1, \ldots, m$.
\end{enumerate}
Call this by {\tt DegLex } order. There is a bijection between $\mathbb{F}_2$-rational points of $\mathbb{F}_2^m$ and functions in the coordinate ring $R$ by $\alpha \longleftrightarrow x^\alpha$. We copy the {\tt DegLex } order on $R$ to $\mathbb{F}_2^m$. For a function $f=X^\alpha$ and a point $P = \beta$, 
\[ f(P) = \begin{cases}
1 \;\;\text{ if } x^\alpha | x^\beta \\
0 \;\;\text{ otherwise}
\end{cases}\]
Let $N=2^m$. Define an $N\times N$ matrix $A=(A_{f,P})$ with rows of functions in $R$ and columns of affine points in $\mathbb{F}_2^m$ both indexed by {\tt DegLex } order. Then with this orders on points and functions, we get an isometry-dual matrix. For $m=3$, we get the following matrix $A$.

\[
\begin{array}{ccccccccccccc}
\toprule 
& &000 &001 &010 &100 &011 &101 &110 &111 \\ 
\midrule
1 & &1 &1 &1 &1 &1 &1 &1 &1  \\
x_1 & &0 &1 &0 &0 &1 &1 &0 &1 \\
x_2 & &0 &0 &1 &0 &1 &0 &1 &1 \\
x_3 & &0 &0 &0 &1 &0 &1 &1 &1 \\
x_1 x_2 & &0 &0 &0 &0 &1 &0 &0 &1 \\
x_1 x_3 & &0 &0 &0 &0 &0 &1 &0 &1 \\
x_2 x_3 & &0 &0 &0 &0 &0 &0 &1 &1 \\
x_1 x_2 x_3 & &0 &0 &0 &0 &0 &0 &0 &1 \\
\bottomrule
\end{array}
\]

Then it can be easily checked that this matrix is isometry-dual with the dualizing vector $(1, 1, \ldots, 1)$, that is, we get
\[
A A^T = 
\left[ \begin{array}{cccccccc}
0 &0 &0 &0 &0 &0 &0 &1 \\
0 &0 &0 &0 &0 &0 &1 &1 \\
0 &0 &0 &0 &0 &1 &0 &1 \\
0 &0 &0 &0 &1 &0 &0 &1 \\
0 &0 &0 &1 &0 &1 &1 &1 \\
0 &0 &1 &0 &1 &0 &1 &1 \\
0 &1 &0 &0 &1 &1 &0 &1 \\
1 &1 &1 &1 &1 &1 &1 &1 \\
\end{array}
\right]
\]

For a subset of $n$ rational points, the corresponding columns in $A$ define a $N \times n$ submatrix whose row spaces define a flag of length $n$ from $0$ to $\mathbb{F}_2^n$. From this $N\times n$ matrix choose $n$ rows in a way that the chosen row is linearly independent on the previous rows. Then we get a complete flag of $\mathbb{F}_2^n$ generated by first $i$ rows of the $n\times n$ matrix. We are interested in the case when this choice gives an isometry-dual flag. The next two tables give the relevant minors that define the flag for the subsets of points $\{ 000, 001, 010, 011 \}$ 
and $\{ 000, 001, 010, 111 \}$.
\[
\begin{array}{ccccccccccccc}
\toprule 
& &000 &001 &010 &011 \\ 
\midrule
1 & &1 &1 &1 &1  \\
x_1 & &0 &1 &0 &1 \\
x_2 & &0 &0 &1 &1 \\
x_1 x_2 & &0 &0 &0 &1 \\
\bottomrule
\end{array}
\qquad
\begin{array}{ccccccccccccc}
\toprule 
& &000 &001 &010 &111 \\ 
\midrule
1 & &1 &1 &1 &1  \\
x_1 & &0 &1 &0 &1 \\
x_2 & &0 &0 &1 &1 \\
x_3 & &0 &0 &0 &1 \\
\bottomrule
\end{array}
\]
The two subsets share the same minors and thus the same flags. Both are isometry-dual. For the first subset this follows immediately with the observation
that the set is the set of all rational points in affine space of dimension $2$, that is, it is in the affine plane which is a hyperplane of a coordinate function $x_3$. The vanishing ideals for the two sets of points are
\begin{align*}
I( \{ 000, 001, 010, 011 \} ) &= (x_3), \qquad \\
I( \{ 000, 001, 010, 111 \} ) &= (x_1 x_2 + x_3, x_1 x_3 + x_3, x_2 x_3 + x_3).
\end{align*}
There are a total of 22 isometry-dual subsets of size $4$. The row span $R_5$ of the first five rows in the 8-by-8 matrix $A$, the rows labeled $1, x_1, x_2, x_3 x_1 x_2$, contains 32 vectors, with weight distribution $0\,(1\times), 2\,(4\times), 4\,(22\times), 6\,(4\times), 8\,(1\times).$ The 22 vectors of weight 4 are the characteristic vectors of the 22 isometry-dual subsets of size 4. They are divided into four groups: 2 are in $R_2 \backslash R_1$, 4 are in $R_3 \backslash R_2$, 8 are in $R_4 \backslash R_3$, and
8 are in $R_5 \backslash R_4$. Minors for subsets in the same group share the same rows.
\begin{align*}
&R_2 \backslash R_1 : 1, x_2, x_3, x_2 x_3~(2\times) \\
&R_3 \backslash R_2 : 1, x_1, x_3, x_1 x_3~(4\times) \\
&R_4 \backslash R_3 : 1, x_1, x_2, x_1 x_2~(8\times) \\
&R_5 \backslash R_4 : 1, x_1, x_2, x_3~(8\times) 
\end{align*}
Let $v=(1,1,1,0,0,0,0,1)$ be the characteristic vector for the subset $\{ 000, 001, 010, 111 \}$. 
\[
A \textnormal{diag}(v) A^T =
\left[ \begin{array}{cccccccc}
0 &0 &0 &1 &1 &1 &1 &1 \\
0 &0 &1 &1 &1 &1 &1 &1 \\
0 &1 &0 &1 &1 &1 &1 &1 \\
1 &1 &1 &1 &1 &1 &1 &1 \\
1 &1 &1 &1 &1 &1 &1 &1 \\
1 &1 &1 &1 &1 &1 &1 &1 \\
1 &1 &1 &1 &1 &1 &1 &1 \\
1 &1 &1 &1 &1 &1 &1 &1 
\end{array}
\right]
\]

The next propositions are proved computationally.

\begin{proposition}
  There are $54 = 2^6-10$ isometry-dual subsets of size 8 in the affine space $\mathbb{F}_2^4$.
  Their characteristic vectors are the vectors of weight 8 in the
  row span $R_6$ of rows corresponding to
  $1, x_1, x_2, x_3, x_4, x_1 x_2$ in $A$. The weight distribution of the row span is $0^1 4^4 8^{54} 12^4 16^1$.
The 54 subsets are divided 
into 6 orbits of sizes $2, 4, 8, 16, 8, 16.$ Pivots are in positions $(x^\alpha,x^{\alpha'})$ :
	\[
	x^\alpha x^{\alpha'} = \begin{cases}
	x_2 x_3 x_4 ~~(2\,\times) \\
	x_1 x_3 x_4 ~~(4\,\times) \\
	x_1 x_2 x_4 ~~(8\,\times) \\
	x_1 x_2 x_3 ~~(16\,\times) \\
	x_3 x_4 ~\text{or}~ x_1 x_2 x_4~~(8\,\times) \\
	x_3 x_4 ~\text{or}~ x_1 x_2 x_3~~(16\,\times) 
	\end{cases}
	\]
         
	Ideals that represent the different groups are
	\[
	I = \begin{cases}
	(x_1) \\
	(x_2) \\
	(x_3) \\
	(x_4) \\
	(x_3 + x_1 x_2,\, x_3 + x_1 x_3,\, x_3 + x_2 x_3) \\
	(x_4 + x_1 x_2,\, x_4 + x_1 x_4,\, x_4 + x_2 x_4) 
	\end{cases}
	\]

\end{proposition}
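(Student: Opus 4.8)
The plan is to pin down the combinatorial structure that a finite computation then verifies. First I would record that over $\mathbb{F}_2$ the group $(\mathbb{F}_2^\times)^n$ is trivial, so the only candidate dualizing vector is the all-ones vector $\mathbf{1}$; hence a size-$8$ subset $T\subseteq\mathbb{F}_2^4$ is isometry-dual exactly when the greedily built $8\times8$ generator matrix $G=G(T)$, the leading linearly independent rows of $A$ after restriction to the columns indexed by $T$, satisfies $(GG^T)_{ij}=0$ for $i+j\le8$ and $(GG^T)_{ij}=1$ for $i+j=9$; equivalently, the punctured flag is self-dual. The basic reduction is the identity $A\,\mathrm{diag}(\chi_T)\,A^T=A_TA_T^T$, showing that the whole $16\times16$ Gram matrix depends on $T$ only through the characteristic vector $\chi_T$. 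Writing $\chi_T=\mathrm{ev}(h)$ for the unique squarefree polynomial $h=\sum_\epsilon h_\epsilon x^\epsilon$ with $T=\{P:h(P)=1\}$, and using that over $\mathbb{F}_2^4$ one has $\sum_Pg(P)=[x_1x_2x_3x_4]\,g$, I would derive the explicit formula
\[
\big(A\,\mathrm{diag}(\chi_T)\,A^T\big)_{\alpha,\beta}=\sum_{\epsilon\,\supseteq\,\{1,2,3,4\}\setminus(\mathrm{supp}\,\alpha\,\cup\,\mathrm{supp}\,\beta)}h_\epsilon ,
\]
so that every Gram entry is an explicit linear functional of the coefficients of $h$, and the condition $\chi_T\in R_6$ becomes simply $h_\epsilon=0$ for every $\epsilon$ outside $\{\emptyset,\{1\},\{2\},\{3\},\{4\},\{1,2\}\}$.

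With this in hand I would prove the characterization ``$T$ is isometry-dual $\iff \chi_T\in R_6$ and $\mathrm{wt}(\chi_T)=8$''. The sufficiency direction is the pleasant one: for $h\in R_6$ balanced, the formula above lets one read off that the Gram matrix restricted to the greedily chosen rows has the required anti-triangular shape, and it simultaneously yields the pivot products $x^\alpha x^{\alpha'}$. The counting is then elementary bookkeeping inside the $6$-dimensional space $R_6=\langle1,x_1,x_2,x_3,x_4,x_1x_2\rangle$: the only non-balanced nonzero functions there are the four quadrant indicators $x_1x_2$, $x_1(1{+}x_2)$, $(1{+}x_1)x_2$, $(1{+}x_1)(1{+}x_2)$ of weight $4$ together with their complements of weight $12$, which gives the weight distribution $0^1\,4^4\,8^{54}\,12^4\,16^1$ and hence $54=2^6-10$ isometry-dual subsets. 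For the orbit data I would use the group $\Gamma$ of affine transformations of $\mathbb{F}_2^4$ whose induced action on monomials is unipotent lower triangular for the \texttt{DegLex} order, equivalently the translations together with the lower unipotent subgroup of $\mathrm{GL}(4,\mathbb{F}_2)$, a group of order $2^{10}$. Since $\Gamma$ preserves the flag it preserves the isometry-dual property and therefore permutes the $54$ subsets; enumerating its orbits, with the greedy selection applied to one representative of each, produces the six orbits of sizes $2,4,8,16,8,16$ with the stated pivots and vanishing ideals. For instance the four coordinate hyperplanes $\{x_i=0\}$ give the ideals $(x_i)$, the pivot product $x_1x_2x_3x_4/x_i$, and orbit sizes $2,4,8,16$ respectively; the ideals $(x_3+x_1x_2,\dots)$ and $(x_4+x_1x_2,\dots)$ correspond to $h=1+x_3+x_1x_2$ and $h=1+x_4+x_1x_2$ and give the mixed-pivot orbits of sizes $8$ and $16$.

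The main obstacle is the necessity direction, that an isometry-dual $T$ must have $\chi_T\in R_6$. Self-duality of the punctured flag does force $C_7|_T=\mathbf{1}_T^\perp$, and hence $\chi_T\perp R_p$ for the position $p$ of the seventh chosen monomial; but $p$ can be as small as $8$ (this already happens for $T=\{x_4=0\}$), so this alone does not yield $\chi_T\in R_6=R_{10}^\perp$. One must additionally feed in the self-duality of the remaining levels $C_i|_T=(C_{8-i}|_T)^\perp$, and the awkward point is that the greedy row choice, hence the positions $p_i$, varies with $T$, so a uniform a priori argument is delicate. In practice the cleanest completion, and the sense in which the proposition is proved computationally, is to reduce to the $\Gamma$-orbit representatives among weight-$8$ vectors and check the finitely many remaining cases directly, the combinatorial formula above making each such verification immediate.
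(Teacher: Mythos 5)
The paper offers no argument for this proposition at all: it is introduced with the single sentence ``The next propositions are proved computationally,'' so the official proof is a brute-force machine check. Your proposal is therefore genuinely different in that it supplies a theoretical scaffold before falling back on computation, and the scaffold is sound. The reduction of the Gram matrix to $A\,\mathrm{diag}(\chi_T)A^T=A_TA_T^T$, the observation that over $\mathbb{F}_2$ the only possible dualizing vector is $\mathbf{1}$, the closed formula $\bigl(A\,\mathrm{diag}(\chi_T)A^T\bigr)_{\alpha,\beta}=\sum_{\epsilon\supseteq\{1,2,3,4\}\setminus(\mathrm{supp}\,\alpha\cup\mathrm{supp}\,\beta)}h_\epsilon$, the identification $R_{10}^\perp=R_6$ from the anti-triangularity of $AA^T$, the weight-distribution count $0^1 4^4 8^{54} 12^4 16^1$ of $R_6$ (correctly isolating the four quadrant indicators and their complements as the only unbalanced nonzero elements), and the orbit bookkeeping under the translation-plus-lower-unipotent group of order $2^{10}$ are all correct and together prove everything in the statement \emph{conditional on} the characterization ``isometry-dual $\iff$ $\chi_T\in R_6$ of weight $8$.'' What this buys over the paper is transparency: the counts $54=2^6-10$ and the orbit sizes $2,4,8,16,8,16$ stop being raw computer output and become consequences of visible structure. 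What it does not buy is a computation-free proof: you assert rather than verify that the Gram matrix of the greedily chosen rows is anti-triangular in the sufficiency direction, and you candidly concede that the necessity direction (an isometry-dual $T$ forces $\chi_T\in R_6$) resists a uniform argument because the greedy pivot positions $p_i$ depend on $T$, so you reduce to finitely many orbit representatives and check them. That residual check is exactly where the paper's ``proved computationally'' lives, so the two proofs bottom out in the same place; yours just shrinks the search space and explains the answer.
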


\begin{proposition}
	There are $118 = 2^7-10$ isometry-dual subsets of size 16 in affine space $\mathbb{F}_2^5$. Their characteristic vectors are the vectors of weight 16 in the
	row span $R_7$ of rows corresponding to 
        $1, x_1, x_2, x_3, x_4, x_5, x_1 x_2$ in $A$. The weight distribution of the row span is $0^1 8^4 16^{118} 24^4 32^1$.
The 118 subsets are divided 
into 8 orbits of sizes $2, 4, 8, 16, 32, 8, 16, 32.$
\end{proposition}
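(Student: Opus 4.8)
The plan is to reduce the statement to an explicit combinatorial condition, then to a weight-enumerator computation, using the computer only for the final orbit bookkeeping. Write $A$ for the $32\times 32$ DegLex matrix of $\mathbb{F}_2^5$, so $A_{\alpha,\gamma}=1$ precisely when $\alpha\le\gamma$ bitwise; $A$ is upper triangular with unit diagonal, hence invertible, and for a size-$16$ subset $S$ the greedily chosen rows form the DegLex footprint $\Delta(S)$ of the vanishing ideal of $S$, with $|\Delta(S)|=16$. Over $\mathbb{F}_2$ the only candidate dualizing vector is the all-ones vector, so the punctured flag at $S$ is isometry-dual exactly when the $16\times 16$ matrix $M_S:=(A\,\mathrm{diag}(\chi_S)\,A^T)[\Delta(S),\Delta(S)]$ satisfies $(M_S)_{i,j}=0$ for $i+j\le 16$ and $(M_S)_{i,j}\neq 0$ for $i+j=17$. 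A direct expansion gives $(A\,\mathrm{diag}(\chi_S)\,A^T)_{\alpha,\beta}=\#\{\gamma\in S:\gamma\ge\alpha\vee\beta\}\bmod 2$, so this matrix depends on $S$ only through up-set parities; taking $S=\mathbb{F}_2^5$ recovers the self-duality of $A$ from the appendix, with anti-diagonal partner the bit-complement $\alpha\leftrightarrow\bar\alpha$ (DegLex is reversed by complementation).

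The heart of the argument is the characterization: $S$ is isometry-dual if and only if $\chi_S$ lies in the span $R_7$ of the first seven rows $1,x_1,x_2,x_3,x_4,x_5,x_1x_2$ of $A$. For the direction $\Leftarrow$, expressing $\chi_S$ as a sum of up-set indicators $[\gamma\ge\delta]$ with $\delta\in\{0,e_1,\dots,e_5,e_1+e_2\}$ collapses $(A\,\mathrm{diag}(\chi_S)\,A^T)_{\alpha,\beta}$ to a short sum of terms $[\alpha\vee\beta\le\delta]$, from which one reads off that $\Delta(S)$ is a staircase compatible with complementation and that $M_S$ has the required shape. For the direction $\Rightarrow$, if $\chi_S$ has a nonzero coordinate on some row $x^\gamma$ with $\deg x^\gamma\ge 2$ and $x^\gamma\neq x_1x_2$, I would factor $x^\gamma=x^{\gamma'}x^{\gamma''}$ into proper factors and locate, by a product-of-functions argument in the spirit of Lemma~\ref{l:pivot}, a position in $M_S$ that is nonzero strictly above the anti-diagonal or zero on it, contradicting isometry-duality.

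Granting the characterization, the numbers are immediate from the decomposition $R_7=\mathrm{RM}(1,5)\oplus\langle x_1x_2\rangle$: a codeword $\ell+\varepsilon\,x_1x_2$ with $\ell$ affine has weight $0$, $32$, or $16$ when $\varepsilon=0$ (one word of each of the first two weights and $62$ of weight $16$), and when $\varepsilon=1$ it has weight $16$ as soon as $\ell$ involves $x_3$, $x_4$, or $x_5$ ($56$ words) and otherwise weight $8$ or $24$ according to its affine part in the $(x_1,x_2)$-plane ($4$ words each). This gives the weight distribution $0^1\,8^4\,16^{118}\,24^4\,32^1$, hence $118=2^7-10$ isometry-dual subsets. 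The group generated by permutations of $x_3,x_4,x_5$, by translations in $x_3,x_4,x_5$, and by the swap $x_1\leftrightarrow x_2$ preserves the DegLex flag, hence permutes isometry-dual subsets; grading the $118$ weight-$16$ words by the smallest $R_i$ containing them, just as for $m=3$ (orbits $2,4,8,8$) and $m=4$ (orbits $2,4,8,16,8,16$), splits them into the claimed orbits of sizes $2,4,8,16,32,8,16,32$.

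I expect the direction $\Rightarrow$ of the characterization to be the main obstacle: the footprint $\Delta(S)$ changes with $S$, so producing a uniform bad entry of $M_S$ for every size-$16$ subset outside $R_7$ is delicate. If a uniform argument proves too hard, the proposition is still established by machine: run over all $\binom{32}{16}\approx 6\times 10^8$ subsets of $\mathbb{F}_2^5$ (or, after quotienting by the symmetry group above, far fewer), compute $\Delta(S)$ and $M_S$, and test the anti-triangular condition, cross-checking the count against the weight enumerator of $R_7$ and against the already-verified cases $m=3$ and $m=4$. This is the route taken in the paper.
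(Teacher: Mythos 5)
The paper offers no argument for this proposition beyond the sentence ``The next propositions are proved computationally,'' so your fallback --- exhaustive machine verification of the anti-triangular condition over all size-$16$ subsets --- is exactly the paper's proof, and in that sense your proposal is sound. What you add on top is genuinely different and worthwhile: the reduction of the count to the weight enumerator of $R_7 = \mathrm{RM}(1,5)\oplus\langle x_1x_2\rangle$ is correct and complete as you state it ($62$ nonconstant affine words of weight $16$, plus $56$ words $x_1x_2+\ell$ of weight $16$ when $\ell$ involves $x_3,x_4,x_5$, plus $4$ each of weights $8$ and $24$), and it explains the otherwise opaque number $118=2^7-10$, which the paper merely asserts. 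The observation that over $\ff_2$ the only dualizing vector is the all-ones vector, so isometry-dual means self-dual here, is also a useful simplification the paper does not make explicit.

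That said, two parts of your conceptual route do not go through as written. First, the central characterization (isometry-dual $\Leftrightarrow$ $\chi_S\in R_7$) is not established: you concede the $\Rightarrow$ direction, but the $\Leftarrow$ direction also has a concrete slip. From $(A\,\mathrm{diag}(\chi_S)\,A^T)_{\alpha,\beta}=\#\{\gamma\in S:\gamma\geq\alpha\vee\beta\}\bmod 2$ and $\#\{\gamma:\gamma\geq\mu\}=2^{5-|\mu|}$, writing $\chi_S=\sum_\delta c_\delta[\cdot\geq\delta]$ collapses the entry to $\sum_\delta c_\delta\,[\,\overline{\alpha\vee\beta}\leq\delta\,]$, not $\sum_\delta c_\delta\,[\,\alpha\vee\beta\leq\delta\,]$; the condition involves the bit-complement, which is what makes the anti-diagonal (complementary pairs) appear. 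This is fixable but it is the formula your whole $\Leftarrow$ argument rests on. Second, the orbit bookkeeping: grading the $118$ words by the smallest $R_i$ containing them yields $2,4,8,16,32$ for the affine words but lumps all $56$ words involving $x_1x_2$ into $R_7\setminus R_6$, so it cannot produce the split $8,16,32$; you need a finer invariant (e.g.\ the largest $j\in\{3,4,5\}$ with $x_j$ appearing), and the claim that these classes are single orbits of your group still needs checking. None of this undermines the proposition --- your computational fallback covers it --- but the conceptual proof you sketch is not yet a proof.
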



\end{document}